\pdfoutput=1

\documentclass[letterpaper]{article}

%%%%%%%%%%%%
% PACKAGES %
%%%%%%%%%%%%

\usepackage{amsmath,amssymb,amsthm}
\usepackage{enumerate}
\usepackage{graphicx}
\usepackage[margin=1in]{geometry}
\usepackage{colonequals}
\usepackage[numbers,sort]{natbib}
\usepackage{thm-restate}
\usepackage{mathtools}

\usepackage[
  pdftitle={Characterization of universal two-qubit Hamiltonians},
  pdfauthor={Andrew M. Childs, Debbie Leung, Laura Mancinska, Maris Ozols},
  bookmarksnumbered=true
]{hyperref}

%%%%%%%%%%%%%%
% NEWCOMMAND %
%%%%%%%%%%%%%%

% Notation

\newcommand{\mc}[1]{\mathcal{#1}}

\newcommand{\tp}{^{\mathsf{T}}}
\newcommand{\ct}{^{\dagger}}
\newcommand{\co}{^{\ast}}
\newcommand{\defeq}{\colonequals}
\newcommand{\qefed}{\equalscolon}

% Boundaries

\newcommand{\abs}[1]{\left|#1\right|}
\newcommand{\set}[1]{\left\{#1\right\}}
\newcommand{\norm}[1]{\left\|#1\right\|}
\newcommand{\opnorm}[1]{\left\|#1\right\|_\infty}
\newcommand{\bra}[1]{\left\langle#1\right|}
\newcommand{\ket}[1]{\left|#1\right\rangle}
\newcommand{\braket}[2]{\left\langle#1|#2\right\rangle}

% Sets of numbers

\newcommand{\C}{\mathbb{C}}
\newcommand{\R}{\mathbb{R}}

\newcommand{\N}{\mathbb{N}}
\newcommand{\Q}{\mathbb{Q}}

% Classical linear groups

\renewcommand{\O}[1]{\texorpdfstring{\ensuremath{\mathrm{O}(#1)}}{O(#1)}}
  \newcommand{\U}[1]{\texorpdfstring{\ensuremath{\mathrm{U}(#1)}}{U(#1)}}
  
  \newcommand{\SU}[1]{\texorpdfstring{\ensuremath{\mathrm{SU}(#1)}}{SU(#1)}}

% Ohter sets of matrices

\renewcommand{\u}[1]{\texorpdfstring{\ensuremath{\mathfrak{u}(#1)}}{u(#1)}}
  \newcommand{\su}[1]{\texorpdfstring{\ensuremath{\mathfrak{su}(#1)}}{su(#1)}}
  \newcommand{\M}[2]{\texorpdfstring{\ensuremath{{\mathrm{M}_{#1}}(#2)}}{M#1(#2)}}
  \newcommand{\SQ}[1]{\mc{S}_{#1}} % qubit permutation matrices

% Operators

\DeclareMathOperator{\tr}{Tr}
\DeclareMathOperator{\cl}{cl}
\DeclareMathOperator{\spn}{span}

% Matrices

\newcommand{\mx}[1]{\begin{pmatrix}#1\end{pmatrix}}                          % matrix
\newcommand{\smx}[1]{\bigl(\begin{smallmatrix}#1\end{smallmatrix}\bigr)}     % small matrix
\newcommand{\sbmx}[1]{\left(\begin{smallmatrix}#1\end{smallmatrix}\right)}   % small (big) matrix
             % aligned matrix
                     % pattern

% Frequently used constructions

\newcommand{\universal}[1]{\texorpdfstring{\mbox{$#1$-universal}}{#1-universal}}
\newcommand{\nonuniversal}[1]{\texorpdfstring{\mbox{$#1$-non-universal}}{#1-non-universal}}
\newcommand{\qubit}[1]{\texorpdfstring{\mbox{$#1$-qubit}}{#1-qubit}}
\newcommand{\Tsimilar}[1]{\texorpdfstring{\mbox{$T$-similar#1}}{T-similar#1}}
\newcommand{\Tbasis}{\texorpdfstring{\mbox{$T$-basis}}{T-basis}}
\newcommand{\uF}{\U{4}}
\newcommand{\eto}[1]{e^{-i #1}}
\newcommand{\Lie}[1]{\ensuremath{\mc{L}\left(#1\right)}}
\newcommand{\id}{I}
\newcommand{\kt}[1]{\ket{\tilde{#1}}}
\def\be{\begin{equation}}
\def\ee{\end{equation}}
\newcommand{\vtheta}{\vec{\theta}}
\newcommand{\vphi}{\vec{\phi}}
\newcommand{\vq}{\vec{q}}
\newcommand{\thetas}{\theta_1, \theta_2, \theta_3, \theta_4}
\newcommand{\thetasn}{\theta_1, \dotsc, \theta_n}
\newcommand{\eps}{\varepsilon}

%%%%%%%%%%%%%%
% NEWTHEOREM %
%%%%%%%%%%%%%%

\theoremstyle{definition}
\newtheorem{theorem}{Theorem}
\newtheorem{lemma}{Lemma}
\newtheorem{claim}{Claim}
\newtheorem{definition}{Definition}
\newtheorem{cor}{Corollary}
\newtheorem{fact}{Fact}

%%%%%%%%%%%%
% DOCUMENT %
%%%%%%%%%%%%

\begin{document}

%%%%%%%%%
% TITLE %
%%%%%%%%%

\title{Characterization of universal two-qubit Hamiltonians}
\author{Andrew M.\ Childs, Debbie Leung, Laura Mancinska, and Maris Ozols \\[5pt]
Department of Combinatorics \& Optimization \\ and Institute for Quantum Computing \\ University of Waterloo}
\date{}
\maketitle

\begin{abstract}
Suppose we can apply a given \qubit{2} Hamiltonian $H$ to any (ordered) pair of qubits. We say $H$ is \universal{n} if it can be used to approximate any unitary operation on $n$ qubits. While it is well known that almost any \qubit{2} Hamiltonian is \universal{2} (Deutsch, Barenco, Ekert 1995; Lloyd 1995), an explicit characterization of the set of \mbox{non-universal} \qubit{2} Hamiltonians has been elusive. Our main result is a complete characterization of \mbox{$2$-non-universal} \qubit{2} Hamiltonians. In particular, there are three ways that a \qubit{2} Hamiltonian $H$ can fail to be universal:
(1) $H$ shares an eigenvector with the gate that swaps two qubits,
(2) $H$ acts on the two qubits independently (in any of a certain family of bases), or
(3) $H$ has zero trace.
A \mbox{$2$-non-universal} \qubit{2} Hamiltonian can still be \universal{n} for some $n \geq 3$. We give some partial results on \universal{3}ity. Finally, we also show how our characterization of \universal{2} Hamiltonians implies the well-known result that almost any \qubit{2} unitary is universal.
\end{abstract}

%%%%%%%%%%%%%%%%%%%%%%
\section{Introduction}
%%%%%%%%%%%%%%%%%%%%%%

It is often useful to understand when a given set of resources is sufficient to perform universal computation. In particular, universal Hamiltonians have many applications in quantum computation.

Suppose we can implement one specific \qubit{2} Hamiltonian $H \in \u{4}$, where $\u{4}$ denotes the set of all $4 \times 4$ Hermitian matrices. Assume we have $n$ qubits and we can apply $H$ to any ordered pair of them for any amount of time. We say that $H$ is \universal{n} if it is possible to approximate any unitary evolution $U \in \U{2^n}$ to any desired accuracy by repeatedly applying $H$ to different pairs of qubits.

It is known that almost any \qubit{2} Hamiltonian is universal \cite{DBE,Lloyd}, i.e., non-universal \qubit{2} Hamiltonians form a measure-zero subset of $\u{4}$. Thus generic interactions are suitable for universal computation. But this does not address the issue of deciding whether a particular Hamiltonian is universal.

Given a specific $H \in \u{4}$, one can check numerically if $H$ is \universal{n} by determining whether $H$, when applied on different pairs of qubits, generates the Lie algebra of $\U{2^n}$ (see Section~\ref{sec:ProvingUniversality}). However, this characterization can be inconvenient for answering structural questions about universality. For example, suppose we can experimentally implement Hamiltonians of a certain restricted form, say, $\alpha (X \otimes I) + \beta (Y \otimes Y)$ for some $\alpha, \beta \in \R$. Determining which of these Hamiltonians are universal is not straightforward using the Lie-algebraic characterization. Indeed, until now there has been no simple \mbox{closed-form} characterization of the set of non-universal \qubit{2} Hamiltonians.

In this paper we characterize the set of all \nonuniversal{2} \qubit{2} Hamiltonians. In particular, our characterization easily answers questions such as those described above. We give a finite list of families of \nonuniversal{2} \qubit{2} Hamiltonians such that each family can be easily parametrized and together they cover all \nonuniversal{2} \qubit{2} Hamiltonians.

The remainder of the paper is organized as follows. Section~\ref{sec:Universality} introduces the concept of universality. We give our definition of universality, contrast this definition with some alternatives, review previous related work, and present a Lie-algebraic formulation. Section~\ref{sec:2Universal} then establishes our main result. We start from some simple families of Hamiltonians that are obviously \nonuniversal{2}, extend them with a class of operations that preserve this property, and then show that the extended families exactly characterize \universal{2}ity. Section~\ref{sec:3Universal} briefly summarizes what we know about \universal{3}ity. Finally, we conclude in Section~\ref{sec:Conclusion} with a discussion of some open problems.

In Appendix~\ref{app:UnitaryUniversality} (added after this paper was published), we show how our results easily imply the well-known result that almost any \qubit{2} unitary is universal.

%%%%%%%%%%%%%%%%%%%%%%%%%%%%%%%%%%%%%%%%%%%
\section{Universality in quantum computing}
\label{sec:Universality}
%%%%%%%%%%%%%%%%%%%%%%%%%%%%%%%%%%%%%%%%%%%

%-------------------------------------%
\subsection{Definition of universality}
\label{sec:Problem}
%-------------------------------------%

We begin with some basic definitions needed to precisely specify the problem addressed by this paper.

\begin{definition}
We say that $H$ is an \emph{\qubit{n} Hamiltonian} if $H \in \u{2^n}$, i.e., $H\in\M{2^n}{\C}$ ($\M{N}{\C}$ denotes the set of $N \times N$ complex matrices) and $H$ is Hermitian ($H\ct = H$).
\end{definition}

In this paper we mainly deal with \qubit{2} Hamiltonians, i.e., $4\times4$ Hermitian matrices. We often say ``a Hamiltonian $H$'' without explicitly mentioning that it is a \qubit{2} Hamiltonian.

\begin{definition}
\label{def:Simulate}
We say that we can \emph{simulate} a unitary transformation $U\in \U{N}$ using Hamiltonians $H_1,\dotsc,H_k\in\u{N}$ if for all $\varepsilon>0$ there exist $l \in \N$, $j_1, \dotsc, j_l \in \set{1,\dotsc,k}$, and $t_1, \dotsc, t_l > 0$ such that
\begin{equation}
  \opnorm{U-\eto{H_{j_1}t_1}\eto{H_{j_2}t_2}\dotso\eto{H_{j_l}t_l}} < \varepsilon.
\end{equation}
\end{definition}

Definition~\ref{def:Simulate} only allows the use of \mbox{positive} $t_i$ for simulating a unitary $U$ by Hamiltonians $H_1,\dotsc, H_k$, since $t_i$ corresponds to the length of time the system evolves according to $H_{j_i}$. However, this restriction can be relaxed to $t_i\in\R$. This is because an evolution by negative time can be approximated by evolving our system according to $H$ for some positive time instead (see Claim~\ref{cl:NegativeTime} in Appendix~\ref{app:NegativeTime} for a proof).

We only require the ability to \emph{approximate} any unitary to arbitrary precision. Such a definition is motivated by related universality problems based on discrete universal gate sets to be discussed below. We are not concerned about the time it takes to complete the simulation as long as we can simulate any unitary. Also, we do not assume the availability of ancillary systems.

\begin{definition}
\label{def:Universal}
Let $2\leq m\leq n$.
We say that an \qubit{m} Hamiltonian $H$ is \emph{\universal{n}} if we can simulate all unitary transformations in $\U{2^n}$ using Hamiltonians from the set
\begin{equation}
  \set{P (H \otimes I^{\otimes n-m}) P\ct \mid P \in \SQ{n}},
\end{equation}
where $\SQ{n}$ is the group of matrices that permute $n$ qubits. That is, we can apply $H$ to any ordered subset of $m$ qubits (out of $n$ qubits in total).
\end{definition}

The main goal of this paper is to characterize the set of \universal{2} \qubit{2} Hamiltonians. One motivation for this is that any \universal{2} \qubit{2} Hamiltonian is also \universal{n} for all integers $n\geq 2$ (see Lemma~\ref{lem:Universal} in Section~\ref{sec:ProvingUniversality}). Note that a \qubit{2} Hamiltonian $H$ is \universal{2} if we can simulate all unitary transformations in $\U{4}$ using $H$ and $THT$, where $T$ is the gate that swaps the two qubits, with the following representation in the computational (i.e., standard) basis:
\begin{equation}
  T\defeq\mx{1&0&0&0\\0&0&1&0\\0&1&0&0\\0&0&0&1}.
\label{eq:T1}
\end{equation}
To achieve our goal, we classify those \qubit{2} Hamiltonians that are \emph{not} \universal{2}.

%----------------------------------------%
\subsection{Other notions of universality}
\label{sec:AltDef}
%----------------------------------------%

Universal primitives for quantum computation, such as Hamiltonians and unitary gates, have been extensively studied previously; see for example Refs.\ \cite{DeutschQNetworks,DiVincenzo,Barenco,ReckZeilinger,ElementaryGates,Lloyd,DBE,KitaevUniversal}. Since the primitives are often physically motivated, there are different definitions of universality appropriate for different circumstances. First, one can study the universality of a set of quantum gates (instead of Hamiltonians). Second, one can study universality assuming ancillary qubits can be prepared and used to facilitate the computation. In particular, one might consider the following definitions of universality with ancillae:

\begin{definition}
\label{def:ReversibleUniversalAncilla}
For all $n,k\in\N$ let $\mc{C}(n,k)$ be the set of all functions from $n$-bit strings to $k$-bit strings. We say that a set of logical gates $\mc{S}$ is \emph{classically universal with ancillae} if for all $n,k\in\N$ and all $C\in\mc{C}(n,k)$ there exist $n_a\in\N$ and a logical circuit $G\in\mc{C}(n+n_a,k+n_a)$ containing gates exclusively from $\mc{S}$ that simulates $C$ using ancillae, i.e., there exists $a\in\set{0,1}^{n_a}$ such that for all $\psi\in\set{0,1}^n$ we have $(C(\psi),a)=G(\psi,a)$.
\end{definition}

\begin{definition}
\label{def:UniversalAncilla}
We say that a set of unitary gates $S$ is \emph{(quantumly) universal with ancillae} if for all $n\in\N$, all $\varepsilon>0$, and all $U\in\U{2^n}$, there exist $n_a\in\N$ and a quantum circuit $G\in\U{2^{n+n_a}}$ containing gates exclusively from $S$ that approximates $U$ with precision $\varepsilon$ using ancillae, i.e., there exists $a\in\set{0,1}^{n_a}$ such that for all $\ket{\psi}\in\C^{2^n}$ we have $\norm{(U \ket{\psi}) \otimes \ket{a} - G(\ket{\psi} \otimes \ket{a})} < \varepsilon$.
\end{definition}

Note that in the above definitions we assume the ability to prepare standard basis states. We allow initializing the ancillary bits to arbitrary standard basis states (as opposed to only $\ket{0}$) since some of the gates considered below (e.g., the Toffoli gate and Deutsch's gate) need ancillary bits prepared in basis states other than $\ket{0}$ to achieve universality. However, other reasonable definitions of universality with ancillae are possible.
(For example, the ancillary state need not be preserved in Definitions \ref{def:ReversibleUniversalAncilla} and \ref{def:UniversalAncilla}.)

In the classical case we can implement any logical gate exactly using elements of a universal gate set. In contrast, in the quantum case we only require the ability to \emph{approximate} any unitary to arbitrary precision. This definition is motivated by the need to use discrete universal gate sets to perform fault-tolerant quantum computing \cite{Boykin,Preskill}; such sets cannot implement a continuum of operations exactly.

%---------------------------%
\subsection{Previous results}
\label{sec:PreviousResults}
%---------------------------%

%...............................................%
\subsubsection{Universal gate sets with ancillae} \label{sec:Ancillae}
%...............................................%

It is well known that the gate set $\set{\text{NAND},\text{FANOUT}}$ is classically universal with ancillae. Deutsch \cite{DeutschQNetworks} showed that any gate from a certain family of \qubit{3} unitary gates is quantumly universal with ancillae. DiVincenzo \cite{DiVincenzo} suggested that it might be difficult to implement Deutsch's unitary gates as it is hard to build a mechanical device that brings three spins together. To obviate this, he devised a set of four \qubit{2} unitary gates that is quantumly universal with ancillae. Barenco \cite{Barenco} improved DiVincenzo's result by showing that a \emph{single} \qubit{2} unitary gate $A(\phi,\beta,\theta)$ is universal with ancillae, where
\begin{equation}
  A(\phi,\beta,\theta) \defeq
    \mx{1&0&0&0\\0&1&0&0\\
        0&0&e^{i\beta}\cos\theta&-ie^{i(\beta-\phi)}\sin\theta\\
        0&0&-ie^{i(\beta+\phi)}\sin\theta&e^{i\beta}\cos\theta}
\label{eq:Barenco}
\end{equation}
and $\phi$, $\beta$, and $\theta$ are irrational multiples of $\pi$ and of each other.

%..................................................%
\subsubsection{Universal gate sets without ancillae}
%..................................................%

Sets of unitary gates have also been found that can approximate any unitary transformation \emph{without} the use of ancillary qubits. It is well known \cite{NielsenChuang, Cybenko} that the Controlled-NOT gate together with all \qubit{1} gates form a universal gate set. Furthermore, several different finite sets of universal \qubit{2} quantum gates are known \cite{KitaevUniversal,Boykin,ShorFT}.

%......................................................................%
\subsubsection{Universality of a single \qubit{2} gate without ancillae}
\label{sec:PreviousResultsSingleGate}
%......................................................................%

In 1995, Deutsch, Barenco, and Ekert \cite{DBE} and Lloyd \cite{Lloyd} independently showed that almost any \qubit{2} gate can be used to approximate all \qubit{2} unitary evolutions. In other words, the set of non-universal unitary gates forms a measure-zero subset of the group $\U{4}$. Notably, in order to achieve universality, ancillary qubits are not required. The approaches used in \cite{DBE} and \cite{Lloyd} are similar in many respects and build upon the Lie-algebraic approach of DiVincenzo \cite{DiVincenzo}. Neither approach is constructive and both analyses revolve around the Lie algebra generated by $H$ and $THT$, where $H$ is a Hamiltonian corresponding to a generic unitary and $T$ is the gate exchanging the two qubits (recall equation~(\ref{eq:T1})). The proof in Ref.\ \cite{Lloyd} omits some details (some of which were later filled in by Weaver \cite{Weaver}), whereas Ref.\ \cite{DBE} provides a more complete proof.

Our work builds upon some of the techniques described in Ref.\ \cite{DBE}. Unfortunately, the arguments of that paper have some shortcomings:

\begin{enumerate}
\item The goal of \cite{DBE} is to establish the universality of a generic unitary $U\in\U{4}$. The argument begins by replacing $U$ with a ``Hamiltonian $H$ generating $U$,'' defined as a solution to $U=e^{iH}$. However, there can be different solutions generating different Lie algebras. As a simple example, both
\begin{equation}
  H \defeq\sbmx{  0 &0&0&0\\0&0&0&0\\0&0&0&0\\0&0&0&0} \quad \text{and} \quad
  H'\defeq\sbmx{2\pi&0&0&0\\0&0&0&0\\0&0&0&0\\0&0&0&0}
\end{equation}
generate $U = \id_4$, while only $H'$ can be used to approximate some non-identity evolutions. Thus one should give either a prescription for the choice of the generating Hamiltonian or a proof that different choices generically have the same power, but neither was provided in \cite{DBE}.

\item
The argument makes use of the fact that any gate $A$ given by (\ref{eq:Barenco}) is universal. However, such gates are only universal \emph{with ancillae} (because $\ket{00}$ is a fixed-point of both $A$ and $TAT$, so composing them cannot approximate any $U\in\U{4}$ that does not fix $\ket{00}$), yet the final result claims universality without the need for ancillae.

\item
The argument proceeds by considering a Hamiltonian $H_1$ that generates the gate $A$. The authors claim that $H_1$ is universal due to the linear independence of the following 16 nested commutators of $H_1$ and $TH_1T$:
\begin{equation}
\begin{aligned}
  H_1,   & \\
  H_2    &\defeq TH_1T,\\
  H_j    &\defeq i[H_1,H_{j-1}],\quad j\in\set{3,\dotsc,14},\\
  H_{15} &\defeq i[H_2,H_3],\\
  H_{16} &\defeq i[H_2,H_5],
\end{aligned}
\label{eq:Scheme}
\end{equation}

However, as in item 1, the claim may or may not hold depending on the choice of the Hamiltonian $H_1$ generating $A$. In fact, the most natural choice,
\begin{equation}
  H_1 \defeq \mx{0&0&0&0\\0&0&0&0\\0&0&\beta&-\theta e^{-i\phi}\\0&0&-\theta e^{i\phi}&\beta},
\end{equation}
does not generate $\u{4}$ since the entire Lie algebra fixes $\ket{00}$. However, there are other choices of $H_1$ for which $H_1,\dotsc,H_{16}$ are linearly independent. For example, if one chooses $H_1$ to act diagonally in a random basis on the degenerate $1$-eigenspace of $A$, with eigenvalues $2 \pi$ and $4 \pi$, then $H_1,\dotsc,H_{16}$ are found to be linearly independent in a numerical experiment.

For any explicit Hamiltonian $H$, it is simple to generate the $16$ matrices according to (\ref{eq:Scheme}) and their linear dependence is easily checked. If these $16$ matrices are linearly independent, then we say that (\ref{eq:Scheme}) certifies the universality of $H$.

\item
To show that almost any unitary gate is universal, non-universal gates are argued to lie in a submanifold of $\U{4}$ of at most 15 dimensions. The argument begins by considering a one-parameter family of Hamiltonians $H(k) = H + k (\tilde{H}-H)$ where $k\in\R$, $H$ is arbitrary, and $\tilde{H}$ is a fixed Hamiltonian whose universality is certified by (\ref{eq:Scheme}). Then, unless $k$ is a root of a certain polynomial of finite degree, (\ref{eq:Scheme}) also certifies the universality of $H(k)$. This argument is claimed to extend to a 16-dimensional neighborhood of $H$ (which could be parametrized as $H(k_1,\dotsc,k_{16})= H + k_1 (\tilde{H}^{(1)}-H) + \cdots + k_{16} (\tilde{H}^{(16)}-H)$). However, the explicit analysis of the relevant multivariate polynomial is omitted. Furthermore, the argument requires that (\ref{eq:Scheme}) certifies the universality of each of $\tilde{H}^{(1)},\dotsc,\tilde{H}^{(16)}$, but this is not demonstrated, and it is unclear to us whether it actually holds for some choice of $H_1$.

\end{enumerate}

\noindent Reference \cite{DBE} also conjectures that a \qubit{2} unitary gate is non-universal if and only if it
\begin{enumerate}
\item\label{item:permute} permutes states of some orthonormal basis or
\item\label{item:local} is a tensor product of single-qubit unitary gates.
\end{enumerate}
We note that a unitary gate $U$ satisfying item~\ref{item:permute} need not be non-universal, because $U$ and $TUT$ may not permute the same basis. We presume that the authors of~\cite{DBE} intended to require that \emph{both} $U$ and $TUT$ permute states of the same orthonormal basis.

In Theorem~\ref{thm:2NonUniversal} of this paper, we disprove the above conjecture and give a complete characterization of the set of non-universal \qubit{2} Hamiltonians, thereby resolving a variant of the above question.

%-------------------------------%
\subsection{Proving universality}
\label{sec:ProvingUniversality}
%-------------------------------%

The first step in our quest for a simple closed-form characterization of universal Hamiltonians is a characterization of universality in terms of Lie algebras, just as in \cite{DBE,Lloyd}.

\begin{definition}
\label{def:LieAlgebra}
We write \Lie{H_1,\dotsc,H_k} to denote the \emph{Lie algebra generated by Hamiltonians $H_1,\dotsc,H_k$}. It is defined inductively by the following three rules:
\begin{enumerate}
  \item $H_1,\dotsc,H_k \in \Lie{H_1,\dotsc,H_k}$,
  \item if $A,B \in \Lie{H_1,\dotsc,H_k}$ then $\alpha A + \beta B \in \Lie{H_1,\dotsc,H_k}$ for all $\alpha, \beta \in \R$, and
  \item if $A,B \in \Lie{H_1,\dotsc,H_k}$ then $i[A,B] \defeq i(AB-BA) \in \Lie{H_1,\dotsc,H_k}$.
\end{enumerate}
\end{definition}

The set of evolutions that can be simulated using a set of Hamiltonians is given by the following lemma:

\begin{lemma}
\label{lem:Lie}
Assume that we can evolve according to Hamiltonians $H_1, \dotsc, H_k$ for any desired amount of time. Then we can simulate the unitary $U$ if and only if
\begin{equation}
  U\in\cl\set{\eto{L}: L\in\Lie{H_1,\dotsc,H_k}},
\end{equation}
where ``$\cl$'' denotes the closure of a set.\footnote{This is false without the closure. For example, consider $H = \smx{1&0\\0&\sqrt{2}}$. We can use $H$ to simulate any diagonal $2 \times 2$ unitary but there are diagonal unitary matrices such as $\smx{1&0\\0&-1}$ that are not of the form $\eto{Ht}$ for $t \in \R$.}
\end{lemma}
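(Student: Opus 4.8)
The plan is to prove the two implications separately: ``$\Leftarrow$'' by an explicit simulation construction, and ``$\Rightarrow$'' by appealing to the structure of closed subgroups of the compact group $\U{N}$. Throughout, write $\mathfrak g \defeq i\Lie{H_1,\dotsc,H_k}$, a real Lie algebra of anti-Hermitian matrices, and let $\mc R$ be the set of unitaries that can be simulated using $H_1,\dotsc,H_k$. Two elementary observations will be used repeatedly: $\mc R$ is \emph{closed} (an $(\varepsilon/2)$-approximation of a nearby element of $\mc R$ is an $\varepsilon$-approximation, by the triangle inequality), and $\mc R$ is \emph{closed under products} (a concatenation of words is a word, and $\opnorm{\cdot}$ is submultiplicative on unitaries).

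For ``$\Leftarrow$'', closedness of $\mc R$ reduces the task to showing $\eto{L} \in \mc R$ for every $L \in \Lie{H_1,\dotsc,H_k}$. I would set $S \defeq \set{L \in \Lie{H_1,\dotsc,H_k} : \eto{Lt} \in \mc R \text{ for all } t \in \R}$ and verify that $S$ is closed under the three generating rules of Definition~\ref{def:LieAlgebra}, whence $S = \Lie{H_1,\dotsc,H_k}$. Rule 1: $\eto{H_j t} \in \mc R$ directly for $t > 0$, trivially for $t = 0$, and for $t < 0$ by the negative-time trick (Claim~\ref{cl:NegativeTime}). Rule 2: for $A,B \in S$, the Lie--Trotter product formula $\eto{(\alpha A + \beta B)t} = \lim_{m\to\infty}\bigl(\eto{\alpha A t/m}\,\eto{\beta B t/m}\bigr)^m$ exhibits $\eto{(\alpha A+\beta B)t}$ as a limit of products of elements of $\mc R$. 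Rule 3: for $A,B \in S$, the group-commutator formula $\eto{i[A,B]t} = \lim_{m\to\infty}\bigl(\eto{Bs}\,\eto{As}\,\eto{-Bs}\,\eto{-As}\bigr)^m$ with $s = \sqrt{t/m}$ does the same for $t \geq 0$ (the cases $t < 0$ being analogous, with $A$ and $B$ swapped). In each case closedness of $\mc R$ under products and under limits puts the result back in $\mc R$.

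For ``$\Rightarrow$'', let $U \in \mc R$, so $U$ is a limit of words $\eto{H_{j_1}t_1}\dotsm\eto{H_{j_l}t_l}$ with all $t_i > 0$. Each factor equals $\eto{L}$ with $L = t_i H_{j_i} \in \Lie{H_1,\dotsc,H_k}$, hence lies in the connected Lie subgroup $G \leq \U{N}$ whose Lie algebra is $\mathfrak g$; so every such word lies in $G$ and $U \in \cl(G)$. Since $\U{N}$ is compact, $\cl(G)$ is a closed (hence, by Cartan's closed-subgroup theorem, embedded), compact, connected subgroup of $\U{N}$. It remains to check that $\cl(G)$ coincides with $\cl\set{\eto{L} : L \in \Lie{H_1,\dotsc,H_k}} = \cl(\exp \mathfrak g)$: one inclusion is immediate from $\exp\mathfrak g \subseteq G$, and for the other I would use that the exponential map of the compact connected group $\cl(G)$ is surjective, that $\exp\mathfrak g$ is invariant under $\mathrm{Ad}(G)$ (so its closure is $\mathrm{Ad}(\cl(G))$-invariant), and that conjugates of a maximal torus cover $\cl(G)$, to deduce $\cl(\exp\mathfrak g) = \cl(G)$. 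Then $U \in \cl\set{\eto{L} : L \in \Lie{H_1,\dotsc,H_k}}$, as desired.

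The step I expect to be the main obstacle is precisely this last identification: showing that $\cl\set{\eto{L} : L \in \Lie{H_1,\dotsc,H_k}}$ is already a group, so that no unitary outside it is reachable. Compactness of $\U{N}$ is indispensable here --- on a non-compact group the analogous statement fails, which is the structural reason behind the footnote's example --- and it is the one place where, rather than reprove the relevant compact-group structure theory from scratch, I would invoke the standard analysis of reachable sets of right-invariant control systems on compact Lie groups.
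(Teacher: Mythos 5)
Your forward direction ($\Leftarrow$) is essentially the paper's intended argument: the paper's one-sentence proof cites the Lie product formula, the group-commutator formula for $e^{[A,B]}$, and CBH, and you correctly deploy the first two (together with the negative-time trick and closedness of the reachable set under products and limits) to show that each of the three generating rules of Definition~\ref{def:LieAlgebra} preserves simulability. That part is sound and matches the paper.

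For the reverse direction ($\Rightarrow$), you take a more structural route than the paper's appeal to CBH, and the specific deduction you sketch does not close. You reduce to $\cl(G)\subseteq\cl(\exp\mathfrak g)$ and propose to get it from (i) surjectivity of $\exp$ on the compact connected group $\cl(G)$, (ii) $\mathrm{Ad}(\cl(G))$-invariance of $\cl(\exp\mathfrak g)$, and (iii) the fact that conjugates of a maximal torus cover $\cl(G)$. But (i) is surjectivity of $\exp$ from $\mathrm{Lie}(\cl(G))$, which is \emph{strictly larger} than $\mathfrak g$ whenever the closure is doing any work (as in the footnote's torus example), so it does not produce elements of $\exp\mathfrak g$; and (ii)+(iii) would finish only if you already knew a maximal torus of $\cl(G)$ sits inside $\cl(\exp\mathfrak g)$, which none of the listed facts supply. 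The missing ingredient is the observation that $\mathfrak g$ is an \emph{ideal} of $\mathfrak k:=\mathrm{Lie}(\cl(G))$ (the Ad-stabilizer of $\mathfrak g$ is a closed subgroup containing the dense $G$, hence all of $\cl(G)$); combined with the reductive decomposition $\mathfrak k=\mathfrak z(\mathfrak k)\oplus[\mathfrak k,\mathfrak k]$ this forces $\mathfrak g\supseteq[\mathfrak k,\mathfrak k]$, from which one can even conclude $\exp\mathfrak g=G$ outright. You do flag that you would defer to the control-theory literature at exactly this point --- which is honest, and roughly on par with the paper's own level of detail --- but as written the step as you describe it is a gap, not a proof.
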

One can easily prove the above lemma using the Lie product formula, the analogous formula for $e^{[A,B]}$, and the Campbell-Baker-Hausdorff formula~\cite{BCH}.

Now we can obtain a simpler and more practical sufficient condition for $n$-universa\-lity than the original one from Definition~\ref{def:Universal}.

\begin{cor}
\label{cor:UniversalLie}
Let $m\leq n$. Then an \qubit{m} Hamiltonian $H$ is \universal{n} if
\begin{equation}
  \Lie{\set{P (H \otimes I^{\otimes n-m}) P\ct : P \in \SQ{n}}} = \u{2^n},
\end{equation}
where $\SQ{n}$ is the group of matrices that permute $n$ qubits and $\u{2^n}$ is the set of all $2^n\times 2^n$ Hermitian matrices. In particular, a \qubit{2} Hamiltonian $H$ is \universal{2} if $\Lie{H,THT}=\u{4}$, where $\u{4}$ is the set of all $4\times 4$ Hermitian matrices.
\end{cor}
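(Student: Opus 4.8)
The plan is to derive this directly from Lemma~\ref{lem:Lie}, since the corollary is essentially a restatement of one direction of that lemma combined with a standard fact about the matrix exponential. First I would apply Lemma~\ref{lem:Lie} with the (finite, since $\SQ{n}$ is finite) set of Hamiltonians $\set{P (H \otimes I^{\otimes n-m}) P\ct : P \in \SQ{n}}$. The lemma tells us that we can simulate a unitary $U$ exactly when $U \in \cl\set{\eto{L} : L \in \Lie{\set{P (H \otimes I^{\otimes n-m}) P\ct : P \in \SQ{n}}}}$. Under the hypothesis that this Lie algebra equals $\u{2^n}$, the set in question becomes $\cl\set{\eto{L} : L \in \u{2^n}}$.

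Next I would invoke the surjectivity of the exponential map $L \mapsto \eto{L}$ from $\u{2^n}$ onto $\U{2^n}$: every unitary matrix is diagonalizable with eigenvalues on the unit circle and hence equals $\eto{L}$ for some Hermitian $L$ (equivalently, $\U{N}$ is a compact connected Lie group, so $\exp$ is onto). Therefore $\set{\eto{L} : L \in \u{2^n}} = \U{2^n}$, which is already closed, so the closure above is all of $\U{2^n}$. Combining this with the previous step, we can simulate every $U \in \U{2^n}$ using the prescribed Hamiltonians, which is precisely the statement that $H$ is \universal{n} (Definition~\ref{def:Universal}).

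Finally, for the special case $m = n = 2$, I would observe that $\SQ{2} = \set{\id, T}$ and that $T\ct = T$, so the generating set $\set{P H P\ct : P \in \SQ{2}}$ is simply $\set{H, THT}$; the general claim then specializes to: $H$ is \universal{2} if $\Lie{H, THT} = \u{4}$. There is no substantial obstacle here—the real content is already packaged in Lemma~\ref{lem:Lie}—the only points needing a word of care are the surjectivity of $\exp$ (which is what lets us discard the closure in this particular case, unlike in the footnote's counterexample) and the identification of $\SQ{2}$ with $\set{\id, T}$.
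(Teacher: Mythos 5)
The paper states Corollary~\ref{cor:UniversalLie} without an explicit proof, treating it as an immediate consequence of Lemma~\ref{lem:Lie} and Definition~\ref{def:Universal}; your proposal fills in precisely the intended argument, including the key point (surjectivity of the exponential map on $\u{2^n}$) that makes the closure in Lemma~\ref{lem:Lie} harmless here, and the correct identification of $\SQ{2}=\set{\id,T}$ with $T\ct=T$ in the special case.
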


Now we proceed to show that if a Hamiltonian $H$ is \universal{n} then it is also \universal{n'} for all $n'\geq n$. Note that this is not completely trivial, since the added qubits are not ancillary, i.e., we have to be able to simulate any unitary on \emph{all} of the qubits.

\begin{lemma}
\label{lem:Universal}
If a Hamiltonian $H$ is \universal{n} for some $n \geq 2$, then it is also \universal{n'} for all $n' \geq n$. In particular, a \universal{2} \qubit{2} Hamiltonian $H$ is also \universal{n} for all integers $n\geq 2$.
\end{lemma}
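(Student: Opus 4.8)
The plan is to prove a single inductive step and iterate: it suffices to show that if the \qubit{m} Hamiltonian $H$ is \universal{n} (with $2\le m\le n$), then it is also \universal{(n+1)}; applying this repeatedly, and specializing to $m=2$, gives the lemma. So assume $H$ is \universal{n} and work with $n+1$ qubits and the Hamiltonian set $\set{P(H\otimes I^{\otimes n+1-m})P\ct:P\in\SQ{n+1}}$. The first step is to observe that for any $n$-element subset $A\subseteq\set{1,\dotsc,n+1}$, with $b$ the one remaining qubit, the Hamiltonians obtained by applying $H$ only to $m$-element subsets of $A$ are (after relabeling $A$ as $\set{1,\dotsc,n}$) exactly the Hamiltonians of the \universal{n} setting, each tensored with the identity on qubit $b$, and they form a subset of the full $(n+1)$-qubit Hamiltonian set. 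Since $H$ is \universal{n}, this subset can simulate every unitary of the form $W\otimes I_b$ with $W\in\U{2^n}$; that is, we can simulate an arbitrary $\U{2^n}$ evolution on any $n$-qubit block.

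Next I would extract a standard universal gate set from this. Because $n\ge2$, every qubit $i$ and every ordered pair $(i,j)$ of distinct qubits lies in some $n$-element subset $A$; choosing $W$ above to be an arbitrary single-qubit gate on $i$ (respectively a CNOT on $(i,j)$), tensored with identities on the other qubits of $A$, shows that we can simulate every single-qubit unitary on every one of the $n+1$ qubits and a CNOT between every ordered pair of them. By the well-known universality of the controlled-NOT gate together with all single-qubit gates, the group they generate is dense in $\U{2^{n+1}}$, so every $U\in\U{2^{n+1}}$ is a limit of finite products of gates each of which we can simulate. Since $\opnorm{U_1\cdots U_l-V_1\cdots V_l}\le\sum_i\opnorm{U_i-V_i}$ for unitaries, ``simulable'' is closed under composition (concatenate $\varepsilon/l$-accurate simulations of the $l$ factors) and, by a triangle-inequality argument, under limits; hence every $U\in\U{2^{n+1}}$ can be simulated, i.e., $H$ is \universal{(n+1)}. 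Iterating from $n$ to $n+1$ to $n+2$ and so on finishes the proof.

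The only point really requiring care is that the hypothesis $n\ge2$ is genuinely used — it is needed to fit a two-qubit gate inside an $n$-qubit block — which is exactly why the statement is not entirely trivial; the remaining steps are routine combinatorial and error-accumulation bookkeeping, so I expect no serious obstacle. (A more Lie-algebraic alternative would, via Corollary~\ref{cor:UniversalLie}, show that $\Lie{\set{P(H\otimes I^{\otimes n+1-m})P\ct:P\in\SQ{n+1}}}$ contains a copy of $\u{2^n}$ supported on each $n$-qubit block and that such copies generate all of $\u{2^{n+1}}$; but this approach first needs to upgrade \universal{n}ity to the statement $\Lie{\set{P(H\otimes I^{\otimes n-m})P\ct:P\in\SQ{n}}}=\u{2^n}$, which requires a short argument about compact Lie groups since Corollary~\ref{cor:UniversalLie} is stated only as a one-way implication, so the elementary gate-set argument above seems preferable.)
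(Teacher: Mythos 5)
Your proof is correct and follows essentially the same route as the paper's: embed small gates inside $n$-qubit blocks (possible because $n\ge 2$), then invoke the universality of one- and two-qubit gates on $n'$ qubits. The paper states this directly (citing the exact decomposition into one- and two-qubit gates without ancillae), whereas you phrase it as an induction $n\to n+1$ and pass through CNOT plus single-qubit gates with explicit error bookkeeping; these are the same argument presented at different levels of detail.
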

\begin{proof} Since $H$ is \universal{n} for some $n \geq 2$, it can be used to simulate all unitary transformations in $\U{2^{n'}}$ that act non-trivially on no more than two qubits. But any unitary gate on $n'$ qubits can be decomposed into gates that act \mbox{non-trivially} only on one or two qubits without the need for ancillae \cite{ElementaryGates,NielsenChuang}, so $H$ is \universal{n'}.
\end{proof}

%%%%%%%%%%%%%%%%%%%%%%%%%%%%%%%%%%%%%%%%%%%%%%%%%%%%%%%%
\section{Characterization of \universal{2} Hamiltonians}
\label{sec:2Universal}
%%%%%%%%%%%%%%%%%%%%%%%%%%%%%%%%%%%%%%%%%%%%%%%%%%%%%%%%

In this section we classify the set of \qubit{2} Hamiltonians that are \emph{not} \universal{2}. Since we only consider \universal{2}ity, we simply say that a Hamiltonian is universal (instead of ``\universal{2}'') or \mbox{non-universal} (instead of ``not \universal{2}'') for the remainder of this section.

Our analysis relies on an equivalence relation that partitions the set of all \qubit{2} Hamiltonians into equivalence classes, each containing only universal or non-universal Hamiltonians (but not both). First we identify three families of \mbox{non-universal} Hamiltonians and extend each family to the union of the equivalence classes containing its family members. Then we show that each subset contains a special element whose universality (or non-universality) can be succinctly characterized. This allows us to show universality of any Hamiltonian not belonging to any of the three generalized non-universal families.

%-------------------------------------------------------------%
\subsection{The \texorpdfstring{$T$}{T} gate and the \Tbasis{}}
\label{sec:T}
%-------------------------------------------------------------%

The gate $T$ that swaps two qubits is of central importance since it is the only \mbox{non-trivial} permutation of two qubits. Recall that its matrix representation in the computational basis is
\begin{equation}
  T \defeq \mx{
         1 & 0 & 0 & 0 \\
         0 & 0 & 1 & 0 \\
         0 & 1 & 0 & 0 \\
         0 & 0 & 0 & 1 }.
\label{eq:T}
\end{equation}
It has two eigenspaces, namely
\begin{equation}
  E_- \defeq\spn_{\C}\set{\ket{01}-\ket{10}} \quad \text{ and } \quad
  E_+ \defeq\spn_{\C}\set{\ket{00},\ket{01}+\ket{10},\ket{11}},
\label{eq:EigenspacesOfT}
\end{equation}
where $E_-$ corresponds to the eigenvalue $-1$ and $E_+$ to the eigenvalue $+1$. The normalized vector
\begin{equation}
  \ket{s}\defeq\frac{\ket{01}-\ket{10}}{\sqrt{2}}
  \label{eq:Singlet}
\end{equation}
that spans $E_-$ is called the \emph{singlet} state.

We prove the following basic facts about the $T$ gate in Appendix \ref{app:Tgate}:

\begin{restatable}{fact}{CommutesWithTFact}
The singlet $\ket{s}$ is an eigenvector of a normal matrix $N \in \M{4}{\C}$ if and only if $[N,T]=0$.
\label{fact:CommutesWithT}
\end{restatable}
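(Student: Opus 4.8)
The plan is to prove both directions of the equivalence by exploiting the structure of $T$ as described in equations~(\ref{eq:EigenspacesOfT})--(\ref{eq:Singlet}): the singlet $\ket{s}$ spans the one-dimensional eigenspace $E_-$, and $E_+ = E_-^{\perp}$ is the three-dimensional eigenspace. First I would observe that $T$ is itself normal (indeed Hermitian and unitary), so both $N$ and $T$ are diagonalizable. For the ``if'' direction, suppose $[N,T]=0$. Then $N$ preserves each eigenspace of $T$; in particular $N$ maps the line $E_- = \spn_{\C}\set{\ket{s}}$ into itself, so $N\ket{s} = \lambda\ket{s}$ for some $\lambda\in\C$, i.e., $\ket{s}$ is an eigenvector of $N$. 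Note this direction does not even use normality of $N$.

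For the ``only if'' direction, suppose $\ket{s}$ is an eigenvector of the normal matrix $N$, say $N\ket{s}=\lambda\ket{s}$. The key step is that for a normal matrix, the orthogonal complement of an eigenvector is invariant: since $N$ is normal, $N^{\dagger}\ket{s}=\bar\lambda\ket{s}$, and hence for any $\ket{v}\in E_+ = \ket{s}^{\perp}$ we have $\braket{s}{N v} = \bra{s}N\ket{v} = \overline{\bra{v}N^{\dagger}\ket{s}} = \overline{\bar\lambda\braket{v}{s}} = 0$, so $N$ preserves $E_+$ as well. Thus $N$ is block-diagonal with respect to the decomposition $\C^4 = E_- \oplus E_+$, and since $T = (-1)\,\Pi_{E_-} + (+1)\,\Pi_{E_+}$ acts as a scalar on each block, $N$ and $T$ commute on each block and therefore $[N,T]=0$.

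The step most worth stating carefully is the invariance of $E_+$ under $N$ in the ``only if'' direction, since this is exactly where normality of $N$ is needed (a non-normal $N$ can have $\ket{s}$ as an eigenvector without preserving $\ket{s}^{\perp}$, and then need not commute with $T$). Everything else is routine: once both $E_-$ and $E_+$ are $N$-invariant, commuting with $T$ is immediate because $T$ is a scalar multiple of the identity on each summand. One could alternatively phrase the whole argument spectrally — write $N = \sum_j \mu_j \Pi_j$ with orthogonal spectral projectors $\Pi_j$; $\ket{s}$ being an eigenvector means $\ket{s}$ lies in the range of some $\Pi_j$, which (being orthogonal) then commutes with $T$ since $T$ is constant on $E_-$ and on $E_+$ — but the two-block argument above is the most direct and avoids case analysis.
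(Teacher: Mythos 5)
Your proof is correct and takes essentially the same approach as the paper: both arguments hinge on the orthogonal decomposition $\C^4 = E_- \oplus E_+$ with $E_-$ one-dimensional and spanned by $\ket{s}$, so that normality forces $N$ to respect this decomposition and $T$ is a scalar on each summand. Your phrasing via invariant blocks (rather than the paper's explicit simultaneous eigenbasis) is a cosmetic difference, though it does make visible the nice observation that normality of $N$ is only used in the ``only if'' direction.
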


\begin{restatable}{fact}{OrthogonalEigenvectorFact}
A normal matrix $N \in \M{4}{\C}$ has a common eigenvector with the $T$ gate if and only if it has an eigenvector orthogonal to $\ket{s}$.
\label{fact:OrthogonalEigenvector}
\end{restatable}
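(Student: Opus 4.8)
The plan is to exploit the very simple spectral structure of $T$: it has a one-dimensional $(-1)$-eigenspace spanned by $\ket{s}$ and a three-dimensional $(+1)$-eigenspace $E_+ = \ket{s}^{\perp}$. The key observation is that any vector orthogonal to $\ket{s}$ lies in $E_+$, hence is automatically an eigenvector of $T$; so the nontrivial content of the statement is the direction asserting that a common eigenvector exists whenever $N$ has \emph{some} eigenvector orthogonal to $\ket{s}$.

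First I would prove the easy direction. Suppose $N$ and $T$ share a common eigenvector $\ket{v}$. If $\ket{v}$ is (a scalar multiple of) $\ket{s}$, then $[N,T]=0$ by Fact~\ref{fact:CommutesWithT}; since $N$ is normal, it is diagonalizable and commutes with $T$, so $N$ and $T$ are simultaneously diagonalizable, and we may pick an eigenbasis of $N$ consisting of $\ket{s}$ together with three vectors in $E_+ = \ket{s}^{\perp}$. Any of those three gives an eigenvector of $N$ orthogonal to $\ket{s}$. If instead the common eigenvector $\ket{v}$ is not proportional to $\ket{s}$, then, being an eigenvector of $T$, it must lie in $E_+$ (the only other eigenspace), so $\ket{v} \perp \ket{s}$ and $\ket{v}$ itself is the desired eigenvector of $N$ orthogonal to $\ket{s}$.

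For the converse, assume $N$ has an eigenvector $\ket{w}$ with $\braket{s}{w}=0$, i.e.\ $\ket{w}\in E_+$. Then $T\ket{w}=\ket{w}$, so $\ket{w}$ is simultaneously an eigenvector of $N$ and of $T$, which is exactly a common eigenvector. This direction is in fact immediate once one notices that $\ket{s}^{\perp}=E_+$ consists entirely of $T$-eigenvectors; no normality is even needed here.

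I do not anticipate a serious obstacle: the statement essentially unwinds the definitions together with the explicit eigenspace decomposition~(\ref{eq:EigenspacesOfT}) of $T$, and the one place where care is needed is the first subcase of the easy direction, where one must invoke normality of $N$ (via Fact~\ref{fact:CommutesWithT} and simultaneous diagonalizability) to produce an eigenvector of $N$ lying in the three-dimensional space $E_+$ rather than only the a priori known eigenvector $\ket{s}$.
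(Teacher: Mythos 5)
Your proof is correct and follows essentially the same structure as the paper's: the ``if'' direction is immediate because $\ket{s}^{\perp}=E_+$ consists entirely of $T$-eigenvectors, and the ``only if'' direction splits on whether the common eigenvector lies in $E_+$ (done) or is $\ket{s}$ itself (needs work). The one place you differ is in handling the $\ket{v}=\ket{s}$ subcase: the paper argues directly that, because $N$ is normal and $\ket{s}$ is one of its eigenvectors, the orthogonal complement $E_+=\spn_\C(\ket{s})^\perp$ is an $N$-invariant subspace and therefore contains an eigenvector of $N$; you instead route through Fact~\ref{fact:CommutesWithT} to conclude $[N,T]=0$ and then invoke simultaneous diagonalizability of the two commuting normal matrices. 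Both arguments are valid and use normality in essentially the same way, so this is a cosmetic rather than substantive difference; the paper's version is marginally more self-contained since it does not lean on Fact~\ref{fact:CommutesWithT}.
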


\begin{restatable}{fact}{SingletEigenvecFact}
Suppose $U\in\uF$ and $[U,T]=0$. Then the singlet state $\ket{s}$ is an eigenvector of both $U$ and $U\ct$.
\label{fact:SingletEigenvec}
\end{restatable}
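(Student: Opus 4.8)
The plan is to first obtain that $\ket{s}$ is an eigenvector of $U$ from the hypothesis $[U,T]=0$, and then transfer this to $U\ct$ using unitarity. Since every unitary matrix is normal, Fact~\ref{fact:CommutesWithT} applies directly: from $[U,T]=0$ we conclude that the singlet $\ket{s}$ is an eigenvector of $U$, say $U\ket{s}=\lambda\ket{s}$ for some $\lambda\in\C$. (If one prefers not to invoke Fact~\ref{fact:CommutesWithT}, the same conclusion follows from the elementary fact that commuting operators preserve each other's eigenspaces: by equation~(\ref{eq:EigenspacesOfT}) the $(-1)$-eigenspace $E_-$ of $T$ is one-dimensional and spanned by $\ket{s}$, so $U$ maps $E_-$ into itself, forcing $U\ket{s}=\lambda\ket{s}$.)

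It then remains to deal with $U\ct$. Because $U$ is unitary, $\abs{\lambda}=\norm{U\ket{s}}/\norm{\ket{s}}=1$; in particular $\lambda\neq 0$, and $U$ is invertible with $U^{-1}=U\ct$. Applying $U^{-1}$ to both sides of $U\ket{s}=\lambda\ket{s}$ yields $U\ct\ket{s}=U^{-1}\ket{s}=\lambda^{-1}\ket{s}=\bar\lambda\,\ket{s}$, so $\ket{s}$ is an eigenvector of $U\ct$ as well, with the conjugate eigenvalue.

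I do not expect any real obstacle here: the statement is an immediate consequence of Fact~\ref{fact:CommutesWithT} together with the observation that a unitary matrix is invertible with inverse equal to its adjoint. The only points needing a moment's care are to record that $\lambda\neq 0$ (so that inversion is legitimate) before passing to $U\ct$, and --- if it is useful for later applications --- to note that the eigenvalues of $U$ and $U\ct$ on $\ket{s}$ are complex conjugates of one another.
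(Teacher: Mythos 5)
Your proof is correct and follows essentially the same route as the paper's: first deduce that $\ket{s}$ is an eigenvector of $U$ from the commutation relation (via Fact~\ref{fact:CommutesWithT}, which the paper proves by simultaneous diagonalization of the commuting normal matrices $U$ and $T$), then transfer to $U\ct$. The only cosmetic difference is in the second step, where you use $U^{-1}=U\ct$ explicitly while the paper simply appeals to the fact that a normal matrix and its adjoint share eigenvectors; both are fine.
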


We will use both the computational basis and one in which $T$ is diagonal, with the singlet state as the first basis vector. For definiteness, we choose
\begin{equation}
  U_T \defeq \frac{1}{\sqrt{2}}
       \mx{ 0 & 1 & -1 &  0 \\
            0 & 1 & 1  &  0 \\
            1 & 0 & 0  &  1 \\
            1 & 0 & 0  & -1 }
\label{eq:ut}
\end{equation}
to implement the basis change. We call the resulting basis the \emph{\Tbasis{}}. The $T$ gate and the singlet state become $\tilde{T} \defeq U_T T U_T\ct$ and $\ket{\tilde{s}} \defeq U_T\ket{s}$ given by
\begin{equation}
  \tilde{T}=\mx{-1&0&0&0\\0&1&0&0\\0&0&1&0\\0&0&0&1} \quad \text{ and } \quad
  \ket{\tilde{s}}=\mx{1\\0\\0\\0}.
\label{eq:Ttilde}
\end{equation}

%--------------------------------------------------------------%
\subsection{Three simple families of non-universal Hamiltonians}
\label{sec:Examples}
%--------------------------------------------------------------%

Three families of \mbox{non-universal} Hamiltonians are easily identified.
\begin{fact}
\label{fact:Hyp1}
A \mbox{two-qubit} Hamiltonian $H$ is \mbox{non-universal} if any of the following conditions holds:
\begin{enumerate}
  \item $H$ is a local Hamiltonian, i.e., $H=H_1\otimes I+I\otimes H_2$, for some \qubit{1} Hamiltonians $H_1,H_2$,
  \item $H$ shares an eigenvector with the $T$ gate, or
  \item $\tr(H)=0$.
\end{enumerate}
\end{fact}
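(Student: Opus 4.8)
The plan is to verify each of the three conditions separately by producing, for a Hamiltonian $H$ satisfying it, a proper subspace of $\u{4}$ that contains both $H$ and $THT$ and is closed under real linear combinations and the bracket $i[\cdot,\cdot]$; by Corollary~\ref{cor:UniversalLie} and the inductive definition of $\Lie{\cdot}$, this forces $\Lie{H,THT}\subsetneq\u{4}$, so $H$ is non-universal. (Strictly, Corollary~\ref{cor:UniversalLie} is only a sufficient condition for universality, so one should phrase the conclusion via its contrapositive together with Lemma~\ref{lem:Lie}: if $\Lie{H,THT}\neq\u{4}$ then the simulable unitaries are confined to $\cl\{e^{-iL}:L\in\Lie{H,THT}\}$, which is a proper closed subset of $\U{4}$ because $\Lie{H,THT}$ is a proper subspace — e.g. it misses some traceless Hermitian matrix, whose one-parameter group is not contained in the exponential of a proper subalgebra. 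It is cleanest to just exhibit the invariant subspace and cite the contrapositive of the corollary.)

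For condition~(3), the relevant invariant subalgebra is $\su{4}$, the traceless Hermitian matrices: it is a subspace, it is bracket-closed since $\tr(i[A,B])=0$ always, and $\tr(THT)=\tr(H)=0$, so $\su{4}$ contains both generators; hence $\Lie{H,THT}\subseteq\su{4}\subsetneq\u{4}$. For condition~(2), suppose $H$ and $T$ share the eigenvector $\ket{v}$; since $T$ is a reflection with eigenspaces $E_\pm$ from~(\ref{eq:EigenspacesOfT}), $\ket{v}$ lies in $E_-=\spn\{\ket{s}\}$ or in $E_+$. If $\ket{v}=\ket{s}$, then by Fact~\ref{fact:CommutesWithT} we have $[H,T]=0$, so $THT=H$ and $\Lie{H,THT}=\Lie{H}=\spn_\R\{H\}$, which is at most one-dimensional. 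If instead $\ket{v}\perp\ket{s}$ (the case $\ket v\in E_+$), then by Fact~\ref{fact:OrthogonalEigenvector} (applied with $N=H$, noting $H$ is Hermitian hence normal), $H$ — and by symmetry $THT$, which also has an eigenvector orthogonal to $\ket{s}$ — share a common eigenvector with $T$; the better route is to work in the $T$-basis of~(\ref{eq:ut}): there $\tilde T=\operatorname{diag}(-1,1,1,1)$, and an eigenvector of $\tilde H$ orthogonal to $\ket{\tilde s}$ lies in $\spn\{e_2,e_3,e_4\}$. I would then argue that the set of Hermitian matrices leaving $\ket{\tilde s}^\perp$-type structure invariant... — actually the clean statement is: if $\tilde H$ has an eigenvector $\ket{\tilde v}\perp\ket{\tilde s}$, one can choose the $T$-basis so that $\ket{\tilde v}=e_2$; then both $\tilde H$ and $\tilde T$, hence all iterated brackets, preserve the flag determined by $\C e_2$, so $\Lie{\tilde H,\tilde T\tilde H\tilde T}$ sits inside the proper subalgebra of Hermitian matrices block-respecting the decomposition $\C e_2\oplus e_2^\perp$ — i.e. those with a zero in the appropriate off-diagonal positions — which is bracket-closed and proper. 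For condition~(1), if $H=H_1\otimes I+I\otimes H_2$ then $THT=H_2\otimes I+I\otimes H_1$ is again local, and the local Hamiltonians $\{A\otimes I+I\otimes B\}$ form a $6$-dimensional subspace of the $16$-dimensional $\u{4}$; it is bracket-closed because $i[A\otimes I+I\otimes B,\,A'\otimes I+I\otimes B']=i[A,A']\otimes I+I\otimes i[B,B']$ is local. Hence $\Lie{H,THT}$ is contained in this proper subalgebra.

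The main obstacle is condition~(2) in the sub-case $\ket v\in E_+$: unlike the other cases there is no single slick bracket-closed subspace, so one really does need to pass to a basis adapted to the shared eigenvector and identify the stabilizer subalgebra of the corresponding line (or, dually, hyperplane). Everything else is a one-line trace or tensor-product computation. Throughout I would lean on Facts~\ref{fact:CommutesWithT} and~\ref{fact:OrthogonalEigenvector} to reduce ``shares an eigenvector with $T$'' to the two clean cases ``$\ket s$ is an eigenvector'' and ``some eigenvector is orthogonal to $\ket s$,'' and I would state the final conclusion as the contrapositive of Corollary~\ref{cor:UniversalLie}.
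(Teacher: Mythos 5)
The paper gives no proof of Fact~\ref{fact:Hyp1} (it is introduced with ``Three families of non-universal Hamiltonians are easily identified''), so there is no official argument to compare against; you are supplying a missing proof.

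Your overall approach --- exhibit a proper Lie subalgebra of $\u{4}$ containing $H$ and $THT$ --- is sound, and the three subalgebras you identify ($\su{4}$ for condition (3); the local Hamiltonians $\{A\otimes I + I\otimes B\}$ for condition (1); and the Hermitian matrices block-diagonal with respect to a one-dimensional invariant line for condition (2)) are all correct. However, the meta-level argument you offer for ``$\Lie{H,THT}\subsetneq\u{4}\Rightarrow H$ non-universal'' has a genuine gap. First, this is the \emph{converse} of Corollary~\ref{cor:UniversalLie}, not its contrapositive. Second, the justification you sketch --- that the closure misses ``some traceless Hermitian matrix whose one-parameter group is not contained in the exponential of a proper subalgebra'' --- is false in general: a one-parameter group $\{e^{-iHt}\}$ of an irrational direction in a torus has closure strictly larger than the exponential of the one-dimensional subalgebra $\R H$, so proper subalgebras of a compact Lie algebra can exponentiate to non-closed subgroups whose closures are bigger. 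The implication you want is in fact true, but proving it in general requires the nontrivial fact that a connected Lie subgroup $G$ is normal in its closure $\cl(G)$, forcing the Lie algebra of $\cl(G)$ to contain $\mathfrak g$ as an ideal with abelian quotient --- which for a \emph{proper} subalgebra of $\u{4}$ can only yield $\su{4}$, already closed. You should either invoke this theorem explicitly or, more simply, bypass Lie algebras entirely.

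The cleaner route for each case is to exhibit a proper \emph{closed subgroup} of $\U{4}$ that visibly contains every finite product $e^{-iH t_1}e^{-i(THT)t_2}\dotsm$, which by Definition~\ref{def:Simulate} bounds everything simulable. For condition (3), $\det e^{-iLt}=e^{-i t\tr L}=1$ for traceless $L$, so all products lie in $\SU4 \subsetneq \U4$. For condition (1), $THT = H_2\otimes I + I\otimes H_1$ is again local, and products of local exponentials are product unitaries $U\otimes V$, a closed proper subgroup. For condition (2), your two-subcase split is unnecessary: if $H\ket v=\lambda\ket v$ and $T\ket v=\mu\ket v$ with $\mu=\pm1$, then $THT\ket v = \mu^2\lambda\ket v = \lambda\ket v$, so $\ket v$ is a simultaneous eigenvector of $H$ and $THT$; every simulable unitary therefore has $\ket v$ as an eigenvector, and $\{U\in\U4: U\ket v\in\C\ket v\}$ is a closed proper subgroup. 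This removes both the Lie-theoretic subtlety and the case analysis, and is the kind of one-line observation the authors likely had in mind when declining to prove the fact.
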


In Section \ref{sec:extended} we extend these families to larger sets of \mbox{non-universal} Hamiltonians, so the above do not literally exhaust the set of \mbox{non-universal} Hamiltonians. However, we prove in Section~\ref{sec:Converse} that the extended families contain all \mbox{non-universal} Hamiltonians, so these three families do capture the essence of what makes a Hamiltonian \mbox{non-universal}.

%-------------------------%
\subsection{\Tsimilar{ity}}
\label{sec:TsimTransformations}
%-------------------------%

The following equivalence relation between Hamiltonians is central to our analysis:
\begin{definition}
We say that matrices $A$ and $B$ are \emph{\Tsimilar{}} if there exists a unitary matrix $P$ such that $B=PAP^\dagger$ and $\left[P,T\right]=0$.
\end{definition}

Conjugation by $P$ \emph{preserves universality}, i.e., it maps universal \qubit{2} Hamiltonians to universal Hamiltonians and non-universal \qubit{2} Hamiltonians to non-universal Hamiltonians. In particular:

\begin{theorem}
\label{thm:Tsim}
Let $A,B$ be \Tsimilar{} \qubit{2} Hamiltonians. Then $A$ is universal if and only if $B$ is.
\end{theorem}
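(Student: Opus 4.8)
The plan is to reduce the claim to the elementary fact that conjugating a Hamiltonian by a fixed unitary conjugates all of its time evolutions by that same unitary, and that $T$-similarity is set up precisely so that this conjugation is compatible with the map $X \mapsto TXT$. First I would observe that $T$-similarity is a symmetric relation: if $B = PAP\ct$ with $[P,T] = 0$, then $A = P\ct B P$, and taking adjoints in $PT = TP$ (using $T\ct = T$) gives $[P\ct,T] = 0$ as well; hence it suffices to prove one implication, say that $A$ universal implies $B$ universal. Then, using $T = T\ct = T^{-1}$ together with $[P,T] = [P\ct,T] = 0$, I would record the key identity
\[
  TBT = TPAP\ct T = P(TAT)P\ct ,
\]
so that the two generators $B$ and $TBT$ of the Hamiltonian set relevant to universality are obtained from the generators $A$ and $TAT$ by conjugating both by the single unitary $P$.

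The next step is to transfer this to simulated unitaries. For any Hamiltonian $X$ and any $t \in \R$ we have $\eto{(PXP\ct)t} = P\,\eto{Xt}\,P\ct$, since $(PXP\ct)^k = PX^kP\ct$. Hence, given any product of exponentials built from $\set{B,TBT}$ as in Definition~\ref{def:Simulate}, telescoping the internal factors $P\ct P = I$ yields
\[
  \eto{H'_{j_1}t_1}\dotsm\eto{H'_{j_l}t_l} = P\bigl(\eto{H_{j_1}t_1}\dotsm\eto{H_{j_l}t_l}\bigr)P\ct ,
\]
where each $H'_{j_k} \in \set{B,TBT}$ equals $PH_{j_k}P\ct$ with $H_{j_k} \in \set{A,TAT}$. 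Since the operator norm is unitarily invariant,
\[
  \opnorm{PUP\ct - \eto{H'_{j_1}t_1}\dotsm\eto{H'_{j_l}t_l}} = \opnorm{U - \eto{H_{j_1}t_1}\dotsm\eto{H_{j_l}t_l}} ,
\]
so whenever $\set{A,TAT}$ simulates a unitary $U$, the set $\set{B,TBT}$ simulates $PUP\ct$.

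Finally I would conclude: if $A$ is universal, then $\set{A,TAT}$ simulates every $U \in \U{4}$, so $\set{B,TBT}$ simulates every matrix in $\set{PUP\ct : U \in \U{4}}$, which equals $\U{4}$ because conjugation by the unitary $P$ is a bijection of $\U{4}$; thus $B$ is universal, and the converse follows by the symmetry noted above. (Alternatively, the key identity shows $\Lie{B,TBT} = P\,\Lie{A,TAT}\,P\ct$, since conjugation by a fixed unitary is a Lie-algebra automorphism and $P\,\u{4}\,P\ct = \u{4}$; combined with the Lie-algebraic characterization of universality from Lemma~\ref{lem:Lie} and Corollary~\ref{cor:UniversalLie} this gives the same conclusion.) I do not expect a genuine obstacle here: the only points that need care are the bookkeeping in the identity $TBT = P(TAT)P\ct$, including the passage of $[P,T] = 0$ to adjoints, and the telescoping of the operator product.
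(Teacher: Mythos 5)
Your proposal is correct and follows essentially the same route as the paper's proof: both rest on the identity $TBT = P(TAT)\ct P$ (valid since $[P,T]=0$), the compatibility of conjugation by $P$ with exponentials and products, and the unitary invariance of the operator norm. The only cosmetic difference is that the paper applies $A$'s universality to the target $P\ct U P$ and then conjugates, whereas you simulate $U$ directly and observe that $U \mapsto PUP\ct$ is a bijection of $\U{4}$ — an equivalent bookkeeping choice.
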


\begin{proof}
Assume \qubit{2} Hamiltonians $A$ and $B$ are \Tsimilar{}. Then there is some $P\in\U{4}$ such that $B=PAP^\dagger$ and $\left[P,T\right]=0$. Suppose $A$ is universal. We want to show that $B$ is also universal. We have to show that using $B$ we can simulate any $U\in\U{4}$ with any desired precision $\varepsilon>0$. Since $A$ is universal, we can simulate $P\ct UP\in\U{4}$ with precision $\varepsilon$, i.e., there exists $n\in\N$ and $t_1,\dotsc,t_n\geq 0$ such that
\begin{equation}
  \opnorm{P\ct UP-\eto{A t_1}\eto{T A T t_2}\eto{A t_3}\dotso\eto{T A T t_n}}<\varepsilon.
\label{eq:SimulationA}
\end{equation}
Since $TP=PT$, $B=PAP\ct$ and $e^{VMV^\dag}=Ve^MV^\dag$ for all unitary $V$ and all matrices $M$, we have
\begin{equation}
  \eto{B t_1}\eto{T B T t_2}\eto{B t_3}\dotso\eto{T B T t_n}
  =P\eto{A t_1} \eto{T A T t_2} \eto{A t_3} \dotso \eto{T A T t_n}P\ct.
\label{eq:SimulationB}
\end{equation}
Combining (\ref{eq:SimulationA}) with (\ref{eq:SimulationB}) and noting that the spectral norm is invariant under unitary conjugation gives
\begin{equation}
  \opnorm{U-\eto{B t_1}\eto{T B T t_2}\eto{B t_3}\dotso\eto{T B T t_n}}
  < \varepsilon.
\end{equation}
Hence $\eto{B t_1}\eto{T B T t_2}\eto{B t_3}\dotso\eto{T B T t_n}$ is the desired simulation of $U$ with precision $\varepsilon$. We conclude that $B$ is universal.
\end{proof}

Thus \Tsimilar{ity} partitions the set of all \qubit{2} Hamiltonians into equivalence classes, each containing only universal or non-universal Hamiltonians.

%----------------------------------------------------------------%
\subsection{Three extended families of non-universal Hamiltonians}
\label{sec:extended}
%----------------------------------------------------------------%

In view of Theorem \ref{thm:Tsim}, each family of non-universal Hamiltonians in Fact~\ref{fact:Hyp1} can be extended to include Hamiltonians that are \Tsimilar{} to its elements. We now analyze each of these three extended families.

\begin{enumerate}
\item \Tsimilar{ity} transformations do not preserve the set of local Hamiltonians. For example, when
\begin{equation}
  H\defeq\mx{1&0\\0&0}\otimes I + I \otimes \mx{1&0\\0&0}
  \quad\text{and}\quad
  P\defeq\mx{\frac{1}{\sqrt{2}}&0&0&\frac{1}{\sqrt{2}}\\
          0                &0&1&0                 \\
          0                &1&0&0                 \\
         \frac{1}{\sqrt{2}}&0&0&-\frac{1}{\sqrt{2}}
        },
\end{equation}
$H$ is local and $P$ commutes with $T$, but
$PHP^\dagger
  = I\otimes I+\frac{1}{2}
    \left[ \smx{0 &1\\1&0} \otimes \smx{0& 1\\1&0}
         - \smx{0&-i\\i&0} \otimes \smx{0&-i\\i&0} \right]$
which is non-local. Thus the extended family is strictly larger.

\item \Tsimilar{ity} transformations preserve the property of sharing an eigenvector with the $T$ gate:
\begin{lemma}
\label{lem:Condition2}
The set of \mbox{two-qubit} Hamiltonians sharing an eigenvector with the $T$ gate is closed under conjugation by unitary transformations that commute with $T$.
\end{lemma}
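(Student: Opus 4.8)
The plan is to follow a common eigenvector through the conjugation. Suppose $H$ is a two-qubit Hamiltonian that shares an eigenvector with $T$: there is a nonzero vector $\ket{v}$ and scalars $\lambda,\mu$ with $H\ket{v}=\lambda\ket{v}$ and $T\ket{v}=\mu\ket{v}$. Let $P\in\U{4}$ satisfy $[P,T]=0$, and set $\ket{w}\defeq P\ket{v}$. I would then check the two required eigenvector relations directly. First, $(PHP\ct)\ket{w}=PHP\ct P\ket{v}=PH\ket{v}=\lambda\,P\ket{v}=\lambda\ket{w}$, so $\ket{w}$ is an eigenvector of $PHP\ct$. Second, using $PT=TP$ we get $T\ket{w}=TP\ket{v}=PT\ket{v}=\mu\,P\ket{v}=\mu\ket{w}$, so $\ket{w}$ is also an eigenvector of $T$. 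Since $P$ is unitary, $\ket{w}\neq 0$, so $PHP\ct$ and $T$ genuinely share the eigenvector $\ket{w}$, which is exactly the claim. (One may further note that since $T$ is Hermitian, $[P,T]=0$ implies $[P\ct,T]=0$, so conjugation by $P\ct$ also preserves the property; hence the set in question is in fact mapped \emph{onto} itself, consistent with the informal statement that it is ``closed'' under such conjugations.)

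I do not expect a real obstacle here: the lemma is an immediate consequence of two elementary observations — conjugation by $P$ sends eigenvectors of $H$ to eigenvectors of $PHP\ct$ with the same eigenvalue, and commutation with $T$ means $P$ sends eigenvectors of $T$ to eigenvectors of $T$ — so the only point to verify is that these two phenomena can be applied to the \emph{same} vector, which is precisely what the computation above does. An alternative, slightly longer route would instead invoke the facts already established about the $T$ gate: $H$ is Hermitian, hence normal, so by Fact~\ref{fact:OrthogonalEigenvector} it shares an eigenvector with $T$ if and only if it has an eigenvector orthogonal to the singlet $\ket{s}$; and since $P\ct\in\U{4}$ commutes with $T$, Fact~\ref{fact:SingletEigenvec} gives $P\ct\ket{s}\propto\ket{s}$, so $\braket{s}{Pv}=0$ whenever $\braket{s}{v}=0$, i.e.\ conjugation by $P$ preserves the existence of an eigenvector orthogonal to $\ket{s}$, hence the property of sharing an eigenvector with $T$. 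I would present the first, direct argument in the paper and perhaps remark on the second.
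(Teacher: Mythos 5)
Your direct argument is essentially identical to the paper's proof: both show that $P\ket{v}$ is a common eigenvector of $PHP\ct$ and $T$ by the same two one-line computations. Correct and no meaningful difference.
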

\begin{proof}
Let $U$ satisfy $[U,T]=0$ and let $\ket{v}$ be the eigenvector shared by $H$ and the $T$ gate, i.e., $H\ket{v}=\lambda_H\ket{v}$ and $T\ket{v}=\lambda_T\ket{v}$ for some $\lambda_H$, $\lambda_T$. We claim that $U\ket{v}$ is an eigenvector shared by the $T$ gate and $UHU^\dagger$. First, note that $UHU^\dagger (U\ket{v})= UH\ket{v} = \lambda_H U\ket{v}$. We also have $T(U\ket{v})=UT\ket{v}=\lambda_T U\ket{v}$. Thus $U\ket{v}$ is an eigenvector shared by the $T$ gate and $UHU^\dagger$.
\end{proof}
Therefore, the extension does not add more non-universal Hamiltonians to this family.

\item The set of traceless Hamiltonians is preserved by \Tsimilar{ity} transformations.
\end{enumerate}

\noindent Using the above, we generalize Fact~\ref{fact:Hyp1} to
the following:
\begin{lemma}
\label{lem:Hyp2}
A \mbox{two-qubit} Hamiltonian $H$ is \mbox{non-universal} if any of the following conditions holds:
\begin{enumerate}
  \item $H$ is \Tsimilar{} to a local Hamiltonian,
  \item $H$ shares an eigenvector with the $T$ gate, or
  \item $\tr(H)=0$.
\end{enumerate}
\end{lemma}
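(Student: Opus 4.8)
The plan is to obtain Lemma~\ref{lem:Hyp2} as a direct consequence of Fact~\ref{fact:Hyp1} (which identifies the three base families of non-universal Hamiltonians) together with Theorem~\ref{thm:Tsim} (which says $T$-similarity preserves universality). I would treat the three conditions separately.

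For condition~1, suppose $H$ is $T$-similar to a local Hamiltonian $H'$. By Fact~\ref{fact:Hyp1}(1), $H'$ is non-universal, and since $H$ and $H'$ are $T$-similar, Theorem~\ref{thm:Tsim} forces $H$ to be non-universal as well. For condition~2, I would simply invoke Fact~\ref{fact:Hyp1}(2): the hypothesis ``$H$ shares an eigenvector with the $T$ gate'' is exactly the one already covered there, so no further argument is needed. (As an aside one could note, using Lemma~\ref{lem:Condition2}, that this family is already closed under $T$-similarity transformations, so extending it produces nothing new — hence conditions~2 here and in Fact~\ref{fact:Hyp1} describe the same set.) For condition~3, $\tr(H)=0$ is again precisely the hypothesis of Fact~\ref{fact:Hyp1}(3), so non-universality follows immediately; and since the trace is invariant under any unitary conjugation, in particular under the conjugations appearing in $T$-similarity, this family too is unchanged by the extension.

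I do not expect any substantive obstacle in this lemma: all of the real work has already been done, namely establishing the base cases in Fact~\ref{fact:Hyp1} and establishing preservation of universality under $T$-similarity in Theorem~\ref{thm:Tsim}. The only thing to be careful about is bookkeeping — recognizing that, among the three families, only the local one genuinely grows under $T$-similarity (as the explicit example preceding the lemma shows), while the eigenvector-sharing and traceless families already coincide with their $T$-similar closures, so conditions~2 and~3 can be stated without the ``$T$-similar to'' qualifier.
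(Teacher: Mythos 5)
Your proposal is correct and mirrors the paper's argument exactly: the paper also derives Lemma~\ref{lem:Hyp2} from Fact~\ref{fact:Hyp1} via Theorem~\ref{thm:Tsim}, noting as you do that only the local family actually grows under $T$-similarity while the eigenvector-sharing family (by Lemma~\ref{lem:Condition2}) and the traceless family are already closed. No gaps.
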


Another easily recognized family of non-universal Hamiltonians is the set of generators of orthogonal transformations. However, this set can be shown to be contained in the first family of the above lemma \cite{Thesis}. Similarly, Hamiltonians with degenerate eigenvalues can be shown to be non-universal, since they always share an eigenvector with the $T$ gate (see Fact~\ref{fact:DegEigenvals} in Appendix~\ref{app:Tgate}).

It is straightforward to check whether a given Hamiltonian belongs to the last two families of non-universal Hamiltonians in Lemma \ref{lem:Hyp2}. The following lemma (proved in Appendix \ref{app:Tsimilarlocal}) gives an efficient method to check whether a given Hamiltonian with non-degenerate eigenvalues is \Tsimilar{} to a local Hamiltonian.
\begin{restatable}{lemma}{PatternLemma}
A \qubit{2} Hamiltonian is \Tsimilar{} to a local Hamiltonian if and only if it has an orthonormal basis of eigenvectors $\ket{v_1},\ket{v_2},\ket{v_3},\ket{v_4}$ corresponding to the eigenvalues $\lambda_1,\lambda_2,\lambda_3,\lambda_4$ so that
\begin{enumerate}
  \item $\abs{\braket{v_1}{s}}=\abs{\braket{v_2}{s}}$ and $\abs{\braket{v_3}{s}}=\abs{\braket{v_4}{s}}$, and
  \item $\lambda_1 + \lambda_2 = \lambda_3 + \lambda_4$,
\end{enumerate}
where $\ket{s}$ is the singlet state defined in equation (\ref{eq:Singlet}).
\label{lem:Pattern}
\end{restatable}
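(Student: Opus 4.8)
The claim is a characterization, so I'd prove the two directions separately, and in both directions the main tool is the structure of matrices commuting with $T$. Recall (Fact~\ref{fact:CommutesWithT}) that $[P,T]=0$ iff $\ket{s}$ is an eigenvector of $P$ (for unitary/normal $P$), so such a $P$ decomposes as $P = (\text{phase on }\ket{s}) \oplus (\text{unitary on }\ket{s}^\perp)$, i.e. $P$ acts as a $1 \times 1$ block on $\spn\{\ket{s}\}$ and a $3 \times 3$ block on the orthogonal complement. I would work throughout in the $T$-basis, where $\ket{\tilde s} = (1,0,0,0)\tp$ and $\tilde T = \diag(-1,1,1,1)$, so that a $T$-commuting $P$ is block-diagonal $1 \oplus 3$.

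For the ``only if'' direction: suppose $H = P(H_1 \otimes I + I \otimes H_2)P\ct$ with $[P,T]=0$. A local Hamiltonian $L = H_1 \otimes I + I \otimes H_2$ has a very specific eigenstructure with respect to $\ket{s}$: since $\ket{s}$ is the singlet, it is itself an eigenvector of every local Hamiltonian (the singlet is annihilated by $\vec\sigma\otimes I + I\otimes\vec\sigma$ up to the trace part), in fact $L\ket{s} = \tfrac12(\tr H_1 + \tr H_2 - \text{something})\ket{s}$ — more carefully, writing $H_j = a_j I + \vec b_j \cdot \vec\sigma$, one gets $L\ket{s} = (a_1+a_2)\ket{s}$. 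So one eigenvalue of $L$ equals $a_1 + a_2$ with eigenvector $\ket{s}$, and on $E_+ = \ket{s}^\perp$ the remaining three eigenvalues are $(a_1+a_2) + $ (eigenvalues of $\vec b_1\cdot\vec\sigma\otimes I + I\otimes\vec b_2\cdot\vec\sigma$ restricted appropriately), which are $a_1 + a_2 \pm |\vec b_1| \pm |\vec b_2|$. Thus the four eigenvalues of any local $L$ take the form $\mu \pm p \pm q$, and crucially the eigenvector with eigenvalue $\mu - |\vec b_1| - |\vec b_2|$ is... here I need to track which of the four is orthogonal to $\ket{s}$. The key combinatorial fact is: the four eigenvalues $\{\mu+p+q,\ \mu+p-q,\ \mu-p+q,\ \mu-p-q\}$ can be paired so that each pair sums to $2\mu$, namely $\{\mu+p+q,\mu-p-q\}$ and $\{\mu+p-q,\mu-p+q\}$; this gives condition~2 ($\lambda_1+\lambda_2 = \lambda_3+\lambda_4 = 2\mu$). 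For condition~1: the eigenvectors of $L$ are product states of eigenvectors of $H_1$ and $H_2$, and one computes $|\braket{v}{s}|$ for each of the four product eigenvectors directly, finding that within each sum-$2\mu$ pair the two overlaps with $\ket{s}$ have equal magnitude. Finally, conjugating by $P$ with $[P,T]=0$: since $P$ fixes the line $\spn\{\ket{s}\}$, it preserves $|\braket{v}{s}|$ for any vector (as $|\braket{Pv}{s}| = |\braket{Pv}{Ps}|\cdot|\text{phase}| = |\braket{v}{s}|$), and it does not change eigenvalues, so both conditions survive conjugation.

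For the ``if'' direction: given $H$ with eigenvectors $\ket{v_i}$, eigenvalues $\lambda_i$ satisfying 1 and 2, I'd reverse the construction. Set $\mu := \tfrac12(\lambda_1+\lambda_2)$ and $p,q$ so that $\{\lambda_1,\lambda_2\} = \{\mu+p+q,\mu-p-q\}$ (say $\lambda_1 = \mu + r$, $\lambda_2 = \mu - r$) and $\{\lambda_3,\lambda_4\} = \{\mu + p - q, \mu - p + q\}$ — wait, one needs to solve for $p,q$ from the four eigenvalue gaps, which is possible precisely because of the $\lambda_1+\lambda_2=\lambda_3+\lambda_4$ constraint; there may be sign/ordering choices to handle. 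Then build the target local Hamiltonian $L$ with the matching eigenvalue pattern and a standard choice of product eigenbasis realizing the prescribed overlaps with $\ket{s}$ (condition~1 guarantees the overlap magnitudes are consistent with some product basis — this is where I'd verify that $|\braket{v_1}{s}| = |\braket{v_2}{s}|$ forces the overlap to be of the form $|\cos\alpha\cos\beta|$ type consistent with a product structure). Finally I must produce the unitary $P$ with $B := H = PLP\ct$ and $[P,T]=0$: define $P$ to map the chosen eigenbasis of $L$ to the eigenbasis $\ket{v_i}$ of $H$, matched up within the sum-$2\mu$ pairs and respecting eigenvalues; the constraint $[P,T]=0$ means $P$ must preserve $\spn\{\ket{s}\}$, and this is exactly arrangeable because condition~1 says the two bases have the same ``$\ket{s}$-overlap profile,'' so one can choose the bijection and internal phases/rotations so that $P\ket{s} \in \spn\{\ket{s}\}$.

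The main obstacle is the bookkeeping in the ``if'' direction: going from the two numerical conditions back to an explicit product eigenbasis and a $T$-commuting $P$ requires carefully parametrizing the eigenvectors (e.g. writing each $\ket{v_i}$ in terms of its $\ket{s}$-component and its $E_+$-component, and each product eigenstate likewise) and checking that the degrees of freedom match up — in particular handling the case of degenerate or coincident eigenvalues, and the sign ambiguities in recovering $|\vec b_1|, |\vec b_2|$ from the eigenvalue pattern. I'd structure this as: (a) normal form for local Hamiltonians and their $\ket{s}$-overlaps; (b) invariance of both conditions under $T$-similarity (easy, via Fact~\ref{fact:CommutesWithT}); (c) the converse construction, done most cleanly by reducing to a canonical representative in each $T$-similarity class and matching. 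Since the lemma is deferred to Appendix~\ref{app:Tsimilarlocal}, the full calculation can live there; the proof idea is the dictionary between ``local'' and ``eigenvalue-pairing plus equal-$\ket{s}$-overlap-in-pairs,'' mediated by the $1\oplus 3$ block structure of $T$-commuting unitaries.
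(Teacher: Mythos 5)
The core premise in your ``only if'' direction---that $\ket{s}$ is an eigenvector of every local Hamiltonian $L = H_1\otimes I + I\otimes H_2$, with $L\ket{s}=(a_1+a_2)\ket{s}$---is false. The singlet is annihilated by the \emph{symmetric} operator $A\otimes I + I\otimes A$ for traceless $A$ (total spin zero), but a generic local Hamiltonian has $\vec b_1 \neq \vec b_2$ and then $(\vec b_1\cdot\vec\sigma\otimes I + I\otimes\vec b_2\cdot\vec\sigma)\ket{s}\neq 0$; for instance $H_1=Z$, $H_2=0$ gives $L\ket{s}\propto\ket{01}+\ket{10}$, not proportional to $\ket{s}$. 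So $\ket{s}^\perp$ is not an invariant subspace of $L$, and your ensuing attempt to sort out ``which of the four eigenvectors is orthogonal to $\ket{s}$'' has no footing. The correct route, which you gesture at but do not carry out, is what the paper does: take the product eigenbasis $\ket{v_i}\otimes\ket{w_j}$ of $L$, compute all four overlaps $|\braket{s}{v_i,w_j}|$ directly (none of these is $0$ or $1$ in general), and verify that $|\braket{s}{v_1,w_1}|=|\braket{s}{v_2,w_2}|$ and $|\braket{s}{v_1,w_2}|=|\braket{s}{v_2,w_1}|$, while the corresponding eigenvalues $\alpha_i+\beta_j$ satisfy $(\alpha_1+\beta_1)+(\alpha_2+\beta_2)=(\alpha_1+\beta_2)+(\alpha_2+\beta_1)$. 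No claim that $\ket{s}$ is an eigenvector of $L$ is used, and none is available.

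For the ``if'' direction your plan---build a local $L$ with matching spectral data and then produce a $T$-commuting unitary $P$---is correct in outline but leaves the substantive step (existence of such a $P$) unproved, and this is exactly where the paper modularizes. It first establishes Theorem~\ref{thm:Tsimilarity}: two Hamiltonians are $T$-similar if and only if they admit orthonormal eigenbases with matching eigenvalues and matching overlap magnitudes with $\ket{s}$. Given that, the ``if'' direction of Lemma~\ref{lem:Pattern} reduces to constructing \emph{any} local $H'$ with the prescribed profile: set $\alpha_1=0$, $\alpha_2=\lambda_2-\lambda_3$, $\beta_1=\lambda_1$, $\beta_2=\lambda_3$ (which reproduces $\lambda_1,\dots,\lambda_4$ precisely because $\lambda_1+\lambda_2=\lambda_3+\lambda_4$), and take $\ket{v_1},\ket{w_1}$ to be real unit vectors at angle $\theta=\arcsin\sqrt{2r}$ with $r=|\braket{v_1}{s}|^2$, which realizes the required overlaps. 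Your approach of directly exhibiting $P$ is workable, but it amounts to reproving the content of Theorem~\ref{thm:Tsimilarity} inline (including the phase bookkeeping that makes $P\ket{s}\in\spn\{\ket{s}\}$); it is cleaner to state and prove that invariance once and invoke it.
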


%-----------------------------------------------------------------------------------%
\subsection{The three extended families of non-universal Hamiltonians are exhaustive}
\label{sec:Converse}
%-----------------------------------------------------------------------------------%

In this section we show that the list of non-universal families of Hamiltonians in Lemma~\ref{lem:Hyp2} is in fact complete. This is done by analyzing a special member of each \Tsimilar{ity} equivalence class.

%..............................%
\subsubsection{Tridiagonal form}
\label{sec:Tridiagonal}
%..............................%
We now introduce a normal form for \qubit{2} Hamiltonians.

\begin{definition}
We say that a \qubit{2} Hamiltonian is in \emph{tridiagonal form} if it is of the form~
\begin{equation}
  \mx{
    a & b & 0 & 0 \\
    b & c & d & 0 \\
    0 & d & e & f \\
    0 & 0 & f & g },
\label{eq:Tridiagonal}
\end{equation}
where $a,b,c,d,e,f,g \in \R$ and $b,d,f \geq 0$. If either of $b,d$ is 0, we additionally require that
\begin{itemize}
\item if $b=0$, then $d=f=0$ and $c \geq e \geq g$, and
\item if $d=0$, then $f=0$ and $e \geq g$.
\end{itemize}
Note that a tridiagonal Hamiltonian is of one of the following types:\vspace{0.1in}

\begin{center}
\begin{tabular}{cccc}
  $\sbmx{
    * & + & 0 & 0 \\
    + & * & + & 0 \\
    0 & + & * & + \\
    0 & 0 & + & * }$
  &$\sbmx{
    * & + & 0 & 0 \\
    + & * & + & 0 \\
    0 & + & * & 0 \\
    0 & 0 & 0 & * }$
  &$\sbmx{
    * & + & 0   & 0 \\
    + & * & 0   & 0 \\
    0 & 0 & *_1 & 0 \\
    0 & 0 & 0   & *_2 }$
  &$\sbmx{
    *   & 0   & 0   & 0  \\
    0   & *_1 & 0   & 0  \\
    0   & 0   & *_2 & 0  \\
    0   & 0   & 0   & *_3}$\vspace{0.1in}\\
   Type 1 & Type 2 & Type 3 & Type 4
\end{tabular}\vspace{0.1in}
\end{center}
where $*_1\geq*_2\geq*_3$ and ``$+$'' stands for a positive entry and ``$*$'' for any real entry.
\end{definition}

When given a \qubit{2} Hamiltonian in tridiagonal form, we will often use the letters $a,b,c,d,e,f,g$ to refer to its entries as indicated in equation~(\ref{eq:Tridiagonal}).

\begin{definition}
\label{def:TridiagonalForm}
For any \qubit{2} Hamiltonian $H$, we say that \emph{$\Xi$ is a tridiagonal form of $H$} if $H$ and $\Xi$ are \Tsimilar{} and $\Xi$ is tridiagonal in the \Tbasis{}. (We will use $\tilde{\Xi} \defeq U_T \Xi U_T^\dag$ to denote $\Xi$ in the \Tbasis{}.)
\end{definition}

It follows from the definition that \Tsimilar{} \qubit{2} Hamiltonians share the same tridiagonal forms (if they exist). We now show that for every \qubit{2} Hamiltonian a tridiagonal form indeed exists and is in fact unique. Thus, each equivalence class is uniquely characterized by the tridiagonal form of its Hamiltonians.
\begin{lemma}
\label{thm:Tridiagonalization}
Every \qubit{2} Hamiltonian $H$ has a unique tridiagonal form $\Xi$.
\end{lemma}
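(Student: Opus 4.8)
The plan is to work entirely in the \Tbasis{}, where $\tilde{T} = \mathrm{diag}(-1,1,1,1)$, and to produce $\tilde{\Xi}$ from $\tilde{H} \defeq U_T H U_T\ct$ by conjugating with a unitary $Q$ that commutes with $\tilde{T}$. Since $[\tilde{T},Q]=0$ forces $Q$ to be block-diagonal of the form $Q = [1] \oplus Q_3$ with $Q_3 \in \U{3}$ (up to an irrelevant phase on the first block, which we can fix to $1$), the available freedom is exactly conjugation of the lower-right $3\times 3$ block together with unitary rotation of the ``coupling vector'' joining the singlet to the $E_+$ block. So first I would write $\tilde H$ in block form $\tilde H = \smx{\alpha & w\ct \\ w & M}$ with $\alpha \in \R$, $w \in \C^3$, and $M \in \u 3$, and observe that a \Tsimilar{ity} transformation by $Q = [1]\oplus Q_3$ sends this to $\smx{\alpha & (Q_3 w)\ct \\ Q_3 w & Q_3 M Q_3\ct}$; note $\alpha$ and the spectrum of $M$ are invariants of the \Tsimilar{ity} class.

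For \textbf{existence}: the goal is to choose $Q_3$ so that the resulting matrix, when translated back out of the \Tbasis{} via $U_T\ct$, is tridiagonal with the sign and ordering conventions of~\eqref{eq:Tridiagonal}. Rather than chase the back-transformation symbolically, I would instead characterize directly which matrices in the \Tbasis{} correspond to tridiagonal Hamiltonians: applying $U_T\ct(\cdot)U_T$ to the tridiagonal template~\eqref{eq:Tridiagonal} gives an explicit $4\times 4$ pattern in the \Tbasis{} (a matrix of the block form above with a specific structure on $M$ and $w$), and I would record that pattern once and for all. Then existence reduces to: given $(\alpha, w, M)$, find $Q_3\in\U 3$ conjugating $(w,M)$ into that target pattern. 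This is a standard ``Householder/Lanczos''-type tridiagonalization argument: first use $Q_3$ to rotate $w$ to a multiple of a fixed unit vector $e_1\in\C^3$ (and to make that multiple real and nonnegative — this produces the entry ``$b$'' or its analogue); then the stabilizer of $e_1$ inside $\U 3$ is a $\U 2$ acting on the orthogonal complement, which we use to tridiagonalize the remaining $2\times 2$ piece of $M$ (Hermitian $2\times 2$ matrices are unitarily similar to real symmetric ones, and a further $\SO 2$ rotation makes the off-diagonal entry nonnegative — this produces ``$d$'' and ``$f$''). Handling the degenerate cases $b=0$ or $d=0$ is where the extra clauses in the definition (forcing $f=0$, and imposing orderings $c\ge e\ge g$ or $e\ge g$) come in: when a coupling vanishes the corresponding rotation is unconstrained, and we use the leftover freedom — permutations and real rotations within the decoupled block — to sort the diagonal, which is exactly what the definition demands. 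I would also need to make the real-entry condition $a,b,c,d,e,f,g\in\R$ and $b,d,f\ge 0$ consistent with the back-transformation; this is a finite check on the explicit pattern from $U_T\ct(\cdot)U_T$.

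For \textbf{uniqueness}: suppose $\Xi$ and $\Xi'$ are both tridiagonal forms of $H$; then $\tilde\Xi' = Q\tilde\Xi Q\ct$ for some $Q=[1]\oplus Q_3$ with $Q_3\in\U 3$. Translating to the coupling-vector picture, this says $(w',M')=(Q_3 w, Q_3 M Q_3\ct)$ where $(w,M)$ and $(w',M')$ are both in the tridiagonal target pattern. I would argue that the tridiagonal pattern, together with the sign constraints $b,d,f\ge 0$ and the stated orderings, rigidifies $Q_3$ so that it can only be diagonal with $\pm 1$ entries (or a permutation, in degenerate cases), and then check that every such residual $Q_3$ actually fixes a tridiagonal matrix with our conventions — forcing $\Xi=\Xi'$. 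Concretely: matching the first column pins $Q_3 e_1 = \pm e_1$ (positivity of the coupling kills the sign), then within the stabilizer the $2\times2$ tridiagonal-plus-positive-off-diagonal structure of the rest pins the remaining rotation to $\pm I$, and positivity of $f$ (or the diagonal ordering, when $d=0$) kills the last sign. In the most degenerate Type 4 case, uniqueness is just the statement that a diagonal matrix has a unique diagonal form with sorted entries.

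The main obstacle I anticipate is bookkeeping rather than conceptual: carrying out $U_T\ct(\cdot)U_T$ on the tridiagonal template to get a clean description of the ``tridiagonal pattern in the \Tbasis{}'', and then being careful about all the degenerate branches ($b=0$; $b>0,d=0$; the various coincidences among $c,e,g$) so that the sign/ordering conventions in the definition exactly absorb the residual gauge freedom in $Q_3$. Everything else is a routine Gram–Schmidt/Householder tridiagonalization argument restricted to the block that commutes with $\tilde T$.
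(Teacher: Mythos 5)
Your core strategy matches the paper's: pass to the \Tbasis{}, where $T$-commuting unitaries are exactly (a phase times) $\id_1 \oplus \U{3}$, Householder-tridiagonalize $\tilde{H}$ within that group while absorbing leftover gauge into the degenerate-case conventions, and rigidify column-by-column for uniqueness. But you have Definition~\ref{def:TridiagonalForm} inverted: $\Xi$ is a tridiagonal form of $H$ precisely when $\tilde{\Xi} = U_T \Xi U_T\ct$ --- the representation already written in the \Tbasis{} --- has the shape~\eqref{eq:Tridiagonal}. It is not $\Xi$ itself (in the computational basis) that is required to be tridiagonal. Consequently, once you have passed to $\tilde{H}$, the target of the tridiagonalization is literally the pattern~\eqref{eq:Tridiagonal}, and there is no ``pattern in the \Tbasis{} that corresponds to tridiagonal Hamiltonians'' to work out, nor any back-transformation via $U_T\ct$ to chase. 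The detour you propose --- apply $U_T\ct(\cdot)U_T$ to the template and then match $\tilde{H}$ to that conjugated pattern --- would have you tridiagonalizing toward the wrong target; and even within your reading of the definition the conjugation direction is backwards, since mapping a computational-basis matrix into \Tbasis{} coordinates is $U_T(\cdot)U_T\ct$, not $U_T\ct(\cdot)U_T$.

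Once that misreading is corrected, the rest of your outline is exactly the paper's argument: write $\tilde{H} = \sbmx{\alpha & w\ct \\ w & M}$ with $\alpha \in \R$, $w \in \C^3$, $M \in \u{3}$; rotate $w$ onto $b\,e_1$ with $b = \norm{w} \geq 0$ using $\id_1 \oplus \U{3}$; use the residual $\id_2 \oplus \U{2}$ to bring the second column into shape and make $d \geq 0$; use the remaining $\id_3 \oplus \U{1}$ phase to make $f \geq 0$; when $b=0$ or $d=0$, diagonalize and sort the decoupled block, which is what the extra clauses in the definition of tridiagonal form are there to absorb. For uniqueness, matching columns pins the conjugating $Q_3$ down to signs (or permutations in the degenerate types), each of which fixes the normal form. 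You correctly anticipate that the real content is tracking the degenerate branches; that bookkeeping is all the paper's proof does beyond the Householder step.
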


\begin{proof}
Since \Tsimilar{ity} is basis-independent, we prove the lemma in the \Tbasis{}. In other words, we prove that $\tilde{H} \defeq U_T H U_T^\dag$ is \Tsimilar{} to a unique tridiagonal matrix. Note that in the \Tbasis{}, $T$-similar matrices are related by conjugation by some unitary $V \in \U{1}\oplus\U{3}$.

Let the first column of $\tilde{H}$ be $(h_1, h_2, h_3, h_4)\tp$, where $\norm{(h_2, h_3, h_4)\tp} = b \geq 0$. Then we can find $P_1 \in \id_1 \oplus\U{3}$ such that the first column of $\tilde{H}_1 \defeq P_1 \tilde{H} P_1\ct$ is $(h_1, b, 0, 0)\tp$. Now let the second column of $\tilde{H}_1$ be $(h_1', h_2', h_3', h_4')\tp$, where $\norm{(h_3', h_4')\tp} = d \geq 0$, and choose $P_2 \in \id_2 \oplus \U{2}$ such that the second column of $\tilde{H}_2 \defeq P_2 \tilde{H}_1 P_2\ct$ is $(h_1', h_2', d, 0)\tp$. Note that the first column of $\tilde{H}_2$ remains the same as for $\tilde{H}_1$. Finally, we can find $P_3 \in \id_3 \oplus \U{1}$ such that the last entry $f$ of the third column of $\tilde{H}_3 \defeq P_3 \tilde{H}_2 P_3\ct$ is real and \mbox{non-negative}. Since $\tilde{H}_3$ is Hermitian, its diagonal entries are real and it has the form (\ref{eq:Tridiagonal}). If neither $b$ nor $d$ is zero, we are done. If $b=0$, we diagonalize the lower right $3\times 3$ block of $\tilde{H}_3$ by conjugating with unitary transformations of the form $1\oplus\U{3}$. Similarly, if $d=0$ we diagonalize the lower right $2\times 2$ block. Thus we obtain a tridiagonal form of $H$.

Now we show that $\Xi$ is unique. If $\Xi_1$ and $\Xi_2$ are both tridiagonal forms of $H$, then in the \Tbasis{} $\tilde{\Xi}_1$ and $\tilde{\Xi}_2$ are related by conjugation by some $V \in \U{1}\oplus\U{3}$. We first consider $\tilde{\Xi}_1$ of type 1. Since the first column of $\tilde{\Xi}_2$ has to be of the form $(a,b,0,0)\tp$ for some $a,b\in\R$, $b>0$, $V$ has to be of the form $e^{i\varphi}\id_2\oplus\U{2}$ for some $\varphi\in\R$. Similarly, by considering the second and third columns of $\tilde{\Xi}_2$, we conclude that $V=e^{i\varphi}\id_4$. Thus, we have $\tilde{\Xi}_2 = (e^{i\varphi} \id_4) \tilde{\Xi}_1 (e^{-i\varphi} \id_4) = \tilde{\Xi}_1$. If $\tilde{\Xi}_1$ is of type 2, 3, or 4, similar reasoning can be applied; in each case, the form of $V$ is constrained so that $\tilde{\Xi}_1 = \tilde{\Xi}_2$.
\end{proof}

%...............................................................%
\subsubsection{Tridiagonal forms of non-universal Hamiltonians}
\label{sec:TridiagonalNonuniversal}
%...............................................................%

In this section we give a simple characterization of the three families of \mbox{non-universal} Hamiltonians listed in Lemma~\ref{lem:Hyp2} in terms of their tridiagonal forms.

\begin{lemma}
\label{lem:TridiagonalEigenvector}
Let $H$ be a \qubit{2} Hamiltonian and let $\Xi$ be its tridiagonal form, with $\tilde{\Xi}$ given by equation~(\ref{eq:Tridiagonal}). Then $H$ has a common eigenvector with the $T$ gate if and only if $bdf=0$.
\end{lemma}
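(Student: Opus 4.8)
The plan is to work in the $T$-basis throughout (legitimate since $T$-similarity is basis-independent and sharing an eigenvector with $T$ is a $T$-similarity invariant by Lemma~\ref{lem:Condition2}), so that the question becomes: when does the tridiagonal matrix $\tilde\Xi$ of equation~(\ref{eq:Tridiagonal}) share an eigenvector with $\tilde T = (-1)\oplus \id_3$? By Fact~\ref{fact:OrthogonalEigenvector}, $H$ shares an eigenvector with $T$ iff $\tilde\Xi$ has an eigenvector orthogonal to $\ket{\tilde s} = (1,0,0,0)\tp$, i.e.\ an eigenvector of the form $(0, x, y, z)\tp$. So I want to show: $\tilde\Xi$ has an eigenvector with vanishing first coordinate if and only if $bdf = 0$.

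For the easy direction ($bdf = 0 \Rightarrow$ common eigenvector), I would split into the tridiagonal types. If $b = 0$ (Type 2, 3, or 4), the lower-right $3\times 3$ block is diagonal (by the tridiagonal normalization), so each of the three standard basis vectors $\ket{\tilde 2},\ket{\tilde 3},\ket{\tilde 4}$ is an eigenvector of $\tilde\Xi$ orthogonal to $\ket{\tilde s}$, and we are done. If $b \neq 0$ but $d = 0$ (Type 3), then $\tilde\Xi$ block-decomposes as a $2\times 2$ block on coordinates $\{1,2\}$ plus diagonal entries on coordinates $\{3,4\}$; here $\ket{\tilde 3}$ and $\ket{\tilde 4}$ are eigenvectors orthogonal to $\ket{\tilde s}$. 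If $b,d \neq 0$ but $f = 0$ (Type 1 with $f=0$, i.e.\ really Type 2), then coordinate $4$ is decoupled, so $\ket{\tilde 4}$ works. In every case $bdf = 0$ forces a decoupled diagonal entry somewhere among coordinates $2,3,4$, giving the required eigenvector.

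The substantive direction is the converse: if $\tilde\Xi$ has an eigenvector $\ket{v} = (0,x,y,z)\tp \neq 0$, then $bdf = 0$. Suppose instead $b,d,f > 0$. Writing out $\tilde\Xi \ket{v} = \lambda \ket{v}$ row by row: the first row gives $bx = 0$, hence $x = 0$ since $b > 0$; then the second row gives $cx + dy = \lambda x$, i.e.\ $dy = 0$, hence $y = 0$ since $d > 0$; then the third row gives $dy + ey + fz = \lambda y$, i.e.\ $fz = 0$, hence $z = 0$ since $f > 0$; so $\ket{v} = 0$, a contradiction. This is the "irreducible tridiagonal matrices have eigenvectors with nonzero first entry" phenomenon, and it is really just a short induction/cascade through the rows — not a genuine obstacle, more a matter of writing the three equations in the right order. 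I would present it exactly as this cascade. The only point requiring a little care is making sure the case analysis in the easy direction correctly tracks the tridiagonal-form conventions (e.g.\ that $b=0$ genuinely forces $d=f=0$), but that is built into the definition of tridiagonal form, so I would just cite it.
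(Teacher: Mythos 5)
Your proof is correct and follows essentially the same route as the paper: reduce via Fact~\ref{fact:OrthogonalEigenvector} to finding an eigenvector of $\tilde\Xi$ orthogonal to $\ket{\tilde s}$, observe that $bdf=0$ produces a decoupled invariant subspace in the last coordinates, and run the row-by-row cascade for the converse. The paper phrases the easy direction a bit more compactly as ``$\tilde\Xi$ has an invariant subspace orthogonal to $\ket{\tilde s}$ of dimension $3$, $2$, or $1$ according as $b$, $d$, or $f$ vanishes,'' without invoking the normalization conventions of the tridiagonal form, but the substance is identical to your case analysis.
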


\begin{proof}
By Fact~\ref{fact:OrthogonalEigenvector}, $H$ has a common eigenvector with $T$ if and only if $H$ has an eigenvector orthogonal to the singlet $\ket{s}$. By definition of the tridiagonal form, there is a unitary conjugating $H$ to $\tilde{\Xi}$, $T$ to $\tilde{T}$, and taking $\ket{s}$ to $\kt{s}$. Thus it suffices to show that $\tilde{\Xi}$ has an eigenvector orthogonal to $\kt{s}$ if and only if $bdf=0$.

If $bdf=0$, then $\tilde{\Xi}$ has an invariant subspace orthogonal to $\ket{\tilde{s}}$. This subspace has dimension $3$, $2$, or $1$ if $b=0$, $d=0$, or $f=0$, respectively. In any case, it contains at least one eigenvector, so $\tilde{\Xi}$ has an eigenvector orthogonal to $\kt{s}$.

If $\tilde{\Xi}$ has an eigenvector $\kt{v}$ that is orthogonal to $\kt{s}$, then $\kt{v} = (0, v_2, v_3, v_4)\tp$ for some $v_2,v_3,v_4\in\C$, not all zero. Since $\tilde{\Xi}$ is Hermitian, we have $\tilde{\Xi} \kt{v}=r\kt{v}$ for some $r\in \R$, or equivalently,
\begin{equation}
  \mx{b v_2\\c v_2+ d v_3\\d v_2+e v_3+ f v_4 \\ f v_3+ g v_4} = \mx{0\\r v_2\\r v_3\\r v_4}.
\end{equation}
From the first entry, $b v_2 = 0$ so that $b=0$ or $v_2=0$. If $b \neq 0$, then $v_2=0$ and from the second entry, $d v_3=0$, so either $d=0$ or $v_3=0$. If $d \neq 0$, then $v_3=0$ and from the third entry, $f v_4 = 0$ and so, if $f \neq 0$, $v_4 = 0$ which contradicts $\kt{v}\neq 0$. Thus, $bdf=0$.
\end{proof}

\begin{lemma}
\label{lem:TridiagonalLocal}
Let $H$ be a \qubit{2} Hamiltonian not sharing an eigenvector with $T$ and let $\Xi$ be its tridiagonal form. Then $H$ is \Tsimilar{} to a local Hamiltonian if and only if $a=c=e=g$ for $\tilde{\Xi}$ as given in equation~(\ref{eq:Tridiagonal}).
\end{lemma}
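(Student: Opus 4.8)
The plan is to use Lemma~\ref{lem:Pattern}, which characterizes $T$-similarity to a local Hamiltonian in terms of the eigenvalues and the overlaps of eigenvectors with the singlet $\ket{s}$. Since $H$ and its tridiagonal form $\Xi$ are $T$-similar, and $T$-similarity is an equivalence relation preserving all of the data in Lemma~\ref{lem:Pattern} (eigenvalues are preserved under conjugation, and $\abs{\braket{v}{s}}$ is preserved under conjugation by $P$ with $[P,T]=0$ because such $P$ fixes the line through $\ket{s}$ up to phase — Fact~\ref{fact:SingletEigenvec}), it suffices to work with $\tilde{\Xi}$ and $\ket{\tilde s} = (1,0,0,0)\tp$ in the $T$-basis. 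So I will show: $\tilde\Xi$ as in~(\ref{eq:Tridiagonal}) satisfies the two conditions of Lemma~\ref{lem:Pattern} for some ordering of its eigenvectors if and only if $a=c=e=g$.

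First I would dispose of the degenerate structure. Since $H$ shares no eigenvector with $T$, Lemma~\ref{lem:TridiagonalEigenvector} gives $bdf \neq 0$, so $\tilde\Xi$ is of Type~1 (a genuine irreducible tridiagonal matrix with positive off-diagonal entries $b,d,f$). A standard fact about such Jacobi matrices is that all four eigenvalues are distinct and, more importantly, none of the eigenvectors is orthogonal to $\ket{\tilde s} = (1,0,0,0)\tp$ — indeed the first component of an eigenvector satisfies, by the three-term recursion, a relation showing it cannot vanish without forcing the whole eigenvector to vanish (this is essentially the computation already carried out in the proof of Lemma~\ref{lem:TridiagonalEigenvector}). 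Hence $\braket{v_j}{\tilde s} \neq 0$ for all $j$, so condition~1 of Lemma~\ref{lem:Pattern} — pairing up the eigenvectors into two pairs with equal singlet-overlap magnitudes — is a genuine constraint, and condition~2 pairs the eigenvalues correspondingly with $\lambda_1+\lambda_2 = \lambda_3+\lambda_4$.

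For the ``if'' direction ($a=c=e=g$, call the common value $\mu$): then $\tilde\Xi = \mu I + M$ where $M$ is the tridiagonal matrix with zero diagonal and off-diagonal entries $b,d,f$. I expect $M$ to have a symmetry that pairs its eigenvalues as $\pm$ pairs with matched singlet overlaps. The cleanest route is the \emph{persymmetry} / reflection symmetry: conjugating by the exchange matrix $J$ (anti-diagonal ones) sends a tridiagonal matrix with diagonal $(a,c,e,g)$ and off-diagonals $(b,d,f)$ to one with diagonal $(g,e,c,a)$ and off-diagonals $(f,d,b)$; this equals $M$ iff $a=g$, $c=e$, $b=f$ — not quite what we have. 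Instead I would use the sign-flip conjugation by $D = \mathrm{diag}(1,-1,1,-1)$: this fixes the diagonal and negates all off-diagonal entries, so $D\tilde\Xi D = 2\mu I - \tilde\Xi$ exactly when $a=c=e=g$. Hence if $\lambda$ is an eigenvalue of $\tilde\Xi$ with eigenvector $\ket{v}$, then $2\mu-\lambda$ is an eigenvalue with eigenvector $D\ket{v}$, and $\abs{\braket{Dv}{\tilde s}} = \abs{v_1} = \abs{\braket{v}{\tilde s}}$ since $D$ fixes the first coordinate. Pairing each $\lambda$ with $2\mu-\lambda$ gives two pairs $\{\lambda_1,\lambda_2\}=\{\lambda,2\mu-\lambda\}$ and $\{\lambda_3,\lambda_4\}=\{\lambda',2\mu-\lambda'\}$ with $\lambda_1+\lambda_2 = 2\mu = \lambda_3+\lambda_4$ and matched overlaps, verifying both conditions of Lemma~\ref{lem:Pattern}; so $H$ is $T$-similar to a local Hamiltonian.

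For the ``only if'' direction, suppose $\tilde\Xi$ has eigenvectors orderable so that Lemma~\ref{lem:Pattern} holds. I would argue that the two conditions force $\tilde\Xi$ (equivalently $M = \tilde\Xi - \mu I$, now without knowing $a=c=e=g$) to commute with, or be intertwined by, the sign-flip $D$, and then read off $a=c=e=g$. Concretely: conditions 1 and 2 say there is an involution $\sigma$ on the eigenvectors (swapping $1\leftrightarrow 2$, $3 \leftrightarrow 4$) under which $\abs{\braket{v_j}{\tilde s}}$ is constant on orbits and $\lambda_j + \lambda_{\sigma(j)}$ is constant equal to some $2\mu$. Using the explicit proof of Lemma~\ref{lem:Pattern} (which constructs the commuting-with-$T$ unitary $P$ realizing the locality), the matched overlaps plus eigenvalue pairing should imply that the unitary $R$ permuting $\ket{v_j} \mapsto e^{i\theta_j}\ket{v_{\sigma(j)}}$ can be chosen to fix $\ket{\tilde s}$ and to conjugate $\tilde\Xi$ to $2\mu I - \tilde\Xi$; comparing diagonal entries of $\tilde\Xi$ and $2\mu I - \tilde\Xi$ in the $T$-basis (where the $(1,1)$ entry is $\tilde\Xi_{11} = a$, etc., recalling how the $T$-basis tridiagonal entries sit) then yields $a + a = 2\mu$, $c+c=2\mu$, and so on, hence $a=c=e=g=\mu$.

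The main obstacle I anticipate is the ``only if'' direction: turning the somewhat soft statement of Lemma~\ref{lem:Pattern} (existence of eigenvector overlaps with a certain pattern) into the rigid algebraic conclusion $a=c=e=g$. The cleanest fix is probably to prove the stronger and more symmetric statement directly at the level of the sign-flip symmetry: namely, that for an irreducible $4\times 4$ Jacobi matrix $\tilde\Xi$ with diagonal $(a,c,e,g)$, the condition of Lemma~\ref{lem:Pattern} is equivalent to the existence of $\mu$ with $D\tilde\Xi D = 2\mu I - \tilde\Xi$, which by the above is equivalent to $a=c=e=g$. Establishing that equivalence may require a short argument using the facts that (i) an irreducible Jacobi matrix has simple spectrum, (ii) its eigenvectors are determined up to sign by the recursion and have nonzero first coordinate, and (iii) the sign pattern $D$ is the only candidate diagonal conjugation preserving tridiagonality and fixing $\ket{\tilde s}$. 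I would prove (iii) and plug it into the forward direction rather than re-deriving everything from Lemma~\ref{lem:Pattern}.
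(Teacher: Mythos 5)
Your proposal takes a genuinely different route from the paper, and the ``if'' direction is actually cleaner than the paper's: instead of explicitly computing the eigenvalues and singlet-overlaps of $\tilde{\Xi}$ and checking the conditions of Lemma~\ref{lem:Pattern}, you observe that $D \defeq \mathrm{diag}(1,-1,1,-1)$ commutes with $\tilde{T}$, fixes $\kt{s}$, and satisfies $D\tilde{\Xi}D = 2\mu I - \tilde{\Xi}$ exactly when $a=c=e=g=\mu$, which hands you the required pairing of eigenvectors and eigenvalues in one line. One small step to spell out: you need the involution $\lambda\mapsto 2\mu-\lambda$ to have no fixed point on the (simple, since $bdf\neq 0$) spectrum, so that the four eigenvalues really split into two pairs. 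If $\tilde{\Xi}v=\mu v$ then $Dv=\pm v$; $Dv=-v$ forces $v_1=0$, contradicting $bdf\neq0$, while $Dv=v$ forces $v_2=v_4=0$ and the eigenvalue equation then forces $v=0$.

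The ``only if'' direction, as you flag yourself, has a real gap. Knowing that some $T$-commuting unitary conjugates $\tilde{\Xi}$ to $2\mu I - \tilde{\Xi}$ does not let you ``compare diagonal entries,'' since conjugation does not preserve the diagonal. Here is how to close the gap with machinery already in the paper. Set $\mu \defeq \tfrac12(\lambda_1+\lambda_2)=\tfrac12(\lambda_3+\lambda_4)$ and $\tilde{\Xi}'' \defeq D(2\mu I - \tilde{\Xi})D$, which is again in tridiagonal form: diagonal $(2\mu-a,\,2\mu-c,\,2\mu-e,\,2\mu-g)$, off-diagonals $(b,d,f)$ all positive. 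From the Lemma~\ref{lem:Pattern} pairing, $\tilde{\Xi}''$ has the same spectrum as $\tilde{\Xi}$, and for each common eigenvalue the singlet-overlap magnitudes agree (the $\lambda_1$-eigenvector of $\tilde{\Xi}''$ is $Dv_2$, and $\abs{(Dv_2)_1}=\abs{(v_2)_1}=\abs{(v_1)_1}$ by the overlap condition; likewise for the other three). By Theorem~\ref{thm:Tsimilarity} they are $T$-similar, and uniqueness of tridiagonal form (Lemma~\ref{thm:Tridiagonalization}) then gives $\tilde{\Xi}=\tilde{\Xi}''$, i.e., $a=2\mu-a$, and so on. By contrast, the paper's ``only if'' argument is fully constructive: it normalizes the local Hamiltonian to $\alpha I + (x_1 X + z_1 Z)\otimes I + I\otimes z_2 Z$ by a suitable $U\otimes U$ conjugation and computes its tridiagonal form directly, reading off the equal diagonal. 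Your symmetry-based argument is shorter once the uniqueness lemma is invoked; the paper's is longer but more explicit and does not need Theorem~\ref{thm:Tsimilarity}.
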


\begin{proof}
Assume $a=c=e=g$. $H$ is \Tsimilar{} to a local Hamiltonian if and only if $\Xi$ is, so we can apply Lemma~\ref{lem:Pattern} to $\Xi$. A straightforward calculation gives eigenvectors $\ket{v_{i,j}}$ of $\tilde{\Xi}$ for $i,j \in \{0,1\}$, with eigenvalues
\begin{equation}
  \lambda_{i,j} = a + (-1)^{i} \sqrt{\frac{b^2 + d^2 + f^2 + (-1)^{j} z}{2}} \quad
  \text{where}
  \quad
  z \defeq \sqrt{b^4 + d^4 + f^4 + 2 (b^2 d^2 + d^2 f^2 - b^2 f^2)}.
\end{equation}
The overlaps of these eigenvectors with the singlet state are
\begin{equation}
  \abs{\braket{v_{i,j}}{\tilde{s}}} = \sqrt{\frac{z + (-1)^j (b^2 - d^2 - f^2)}{4z}}.
\end{equation}
For each $j \in \{0,1\}$, $\lambda_{0,j} + \lambda_{1,j} = 2a$ and $\abs{\braket{v_{0,j}}{\tilde{s}}} = \abs{\braket{v_{1,j}}{\tilde{s}}}$, so both conditions in Lemma~\ref{lem:Pattern} are satisfied. Hence $H$ is \Tsimilar{} to a local Hamiltonian.

Now assume that $H$ is \Tsimilar{} to a local Hamiltonian. We prove that $\tilde{\Xi}$ has $a=c=e=g$ by explicitly computing it. We do this in two steps: first we show that $H$ is \Tsimilar{} to some $H'$ of the form
\begin{equation}
  H' = \alpha \id_4 + (x_1 X + z_1 Z) \otimes \id + \id \otimes (z_2 Z)
  \label{eq:SimplifiedHamiltonian}
\end{equation}
for some $\alpha,x_1,z_1,z_2\in\R$ (where $\id \defeq \smx{1&0\\0&1}$, $X \defeq \smx{0&1\\1&0}$, $Y \defeq \smx{0&-i\\i&0}$, $Z \defeq \smx{1&0\\0&-1}$ are Pauli matrices), and then we find the common tridiagonal form of $H$ and $H'$.

\paragraph{Step 1.} Consider conjugating $H$ by $U \otimes U$ where $U\in\SU{2}$ (clearly, $[U\otimes U,T]=0$). It suffices to consider a local Hamiltonian
\begin{equation}
  H = H_1 \otimes \id + \id \otimes H_2
  \label{eq:HamiltonianForTensorProduct}
\end{equation}
for some \qubit{1} Hamiltonians $H_1$ and $H_2$. Pick $V \in \SU{2}$ such that $V H_2 V\ct$ is diagonal. Then pick a diagonal matrix $D \in \SU{2}$ such that $D V H_1 V\ct D\ct$ is a real matrix. Note that $D V H_2 V\ct D\ct$ is still diagonal. Therefore, $U H U\ct$ is of the form (\ref{eq:SimplifiedHamiltonian}), where $U \defeq (D V) \otimes (D V)$.

\paragraph{Step 2.} Recall that $H'$ in the \Tbasis{} is given by $\tilde{H'} \defeq U_T H' U_T\ct$. Using equation (\ref{eq:SimplifiedHamiltonian}), we get
\begin{equation}
  \tilde{H'}
  = \alpha (I \otimes I)
    + x_1 (Y \otimes Y)
    + z_1 (I \otimes X)
    - z_2 (Z \otimes X)
  = \mx{ \alpha    & z_1 - z_2 & 0       & -x_1      \\
         z_1 - z_2 & \alpha  & x_1       & 0         \\
         0         & x_1     & \alpha    & z_1 + z_2 \\
         -x_1      & 0       & z_1 + z_2 & \alpha    }.
\end{equation}
Note that $x_1\neq 0$, or else, by Lemma~\ref{lem:TridiagonalEigenvector}, we contradict the fact that $H$ does not share an eigenvector with $T$. We apply one more \Tsimilar{ity} transformation to bring $\tilde{H'}$ into tridiagonal form. Let $l \defeq \sqrt{x_1^2 + (z_1 - z_2)^2} > 0$ and
\begin{equation}
  Q \defeq \mx{ 1 & 0                   & 0 & 0                   \\
                0 & \frac{z_1 - z_2}{l} & 0 & \frac{x_1}{l}       \\
                0 & 0                   & 1 & 0                   \\
                0 & \frac{-x_1}{l}      & 0 & \frac{z_1 - z_2}{l} }
\end{equation}
which is in $\id_1 \oplus \U{3}$. Then
\begin{equation}
  Q\ct \tilde{H'} Q =
    \mx{ \alpha & l                    & 0                               & 0                               \\
         l      & \alpha               & \frac{-2 x_1 z_2}{l}            & 0                               \\
         0      & \frac{-2 x_1 z_2}{l} & \alpha                          & \frac{x_1^2 + z_1^2 - z_2^2}{l} \\
         0      & 0                    & \frac{x_1^2 + z_1^2 - z_2^2}{l} & \alpha                          }.
\label{eq:Xi}
\end{equation}
Since $H$ does not share an eigenvector with $T$, the $(2,1)$, $(3,2)$, $(4,3)$ entries of $Q\ct \tilde{H'} Q$ are all nonzero. Their signs can be made positive by conjugation with a diagonal matrix with diagonal entries $\pm 1$, preserving the diagonal elements (note that this is a \Tsimilar{ity} transformation). This tridiagonal form of $H'$ and $H$ has equal diagonal entries as claimed.
\end{proof}

Using Lemmas~\ref{lem:TridiagonalEigenvector} and \ref{lem:TridiagonalLocal}, we can restate Lemma~\ref{lem:Hyp2} in terms of the tridiagonal form:
\begin{cor}
\label{cor:Hyp2}
Let $H$ be a \qubit{2} Hamiltonian with tridiagonal form $\Xi$, with $\tilde{\Xi}$ given by (\ref{eq:Tridiagonal}). Then $H$ is non-universal if
\begin{enumerate}
  \item $\tilde{\Xi}$ has $bdf=0$,
  \item $\tilde{\Xi}$ has $a=c=e=g$, or
  \item $\tilde{\Xi}$ has $a+c+e+g=0$.
\end{enumerate}
\end{cor}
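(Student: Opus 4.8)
The plan is to observe that this corollary is essentially a translation of Lemma~\ref{lem:Hyp2} into the language of tridiagonal forms, so the whole argument is a short chain of invocations of results already established. The two organizing principles are: (i) by Lemma~\ref{thm:Tridiagonalization}, $H$ and $\Xi$ are $T$-similar, so by Theorem~\ref{thm:Tsim} it suffices to prove $\Xi$ (equivalently $\tilde{\Xi}$) is non-universal; and (ii) all three hypotheses listed are conditions on the fixed matrix entries $a,b,c,d,e,f,g$ of $\tilde{\Xi}$, so we can check them directly against the hypotheses of the earlier lemmas.

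First I would dispatch condition~1. If $\tilde{\Xi}$ has $bdf=0$, then Lemma~\ref{lem:TridiagonalEigenvector} says $H$ has a common eigenvector with the $T$ gate, so item~2 of Lemma~\ref{lem:Hyp2} immediately gives non-universality. Next, condition~3: since $\tilde{\Xi}$ is given by~(\ref{eq:Tridiagonal}), $a+c+e+g=\tr(\tilde{\Xi})$, and because the trace is invariant under the unitary change of basis relating $H$, $\Xi$ and $\tilde{\Xi}$, the hypothesis $a+c+e+g=0$ is equivalent to $\tr(H)=0$; then item~3 of Lemma~\ref{lem:Hyp2} applies.

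The only place requiring a moment's care is condition~2, because Lemma~\ref{lem:TridiagonalLocal} is stated only for Hamiltonians that do not share an eigenvector with $T$. So I would split into two subcases: if $bdf=0$ as well, we are already done by condition~1; otherwise $bdf\neq0$, and by Lemma~\ref{lem:TridiagonalEigenvector} $H$ does not share an eigenvector with $T$, so Lemma~\ref{lem:TridiagonalLocal} applies and, since $a=c=e=g$, tells us $H$ is $T$-similar to a local Hamiltonian; then item~1 of Lemma~\ref{lem:Hyp2} gives non-universality.

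I do not anticipate any real obstacle here: the proof is purely a matter of matching hypotheses, and the single subtlety (the overlap between conditions~1 and~2 in the scope of Lemma~\ref{lem:TridiagonalLocal}) is handled by the trivial case split described above. The statement uses ``if'' rather than ``if and only if,'' so no converse is needed — that direction is the content of the later sections.
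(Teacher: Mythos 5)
Your proof is correct and matches the paper's approach: the paper proves this corollary by the same translation of Lemma~\ref{lem:Hyp2} via Lemmas~\ref{lem:TridiagonalEigenvector} and~\ref{lem:TridiagonalLocal}, and your explicit case split for condition~2 (to respect the hypothesis of Lemma~\ref{lem:TridiagonalLocal}) is exactly the right point to flag. One small simplification: your organizing principle (i) invoking Theorem~\ref{thm:Tsim} to reduce to $\Xi$ is unnecessary, since Lemmas~\ref{lem:TridiagonalEigenvector} and~\ref{lem:TridiagonalLocal} already draw their conclusions directly about $H$ (not just about $\Xi$), so one can apply Lemma~\ref{lem:Hyp2} to $H$ without first passing through the tridiagonal form.
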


%....................................................................%
\subsubsection{Universality certificate for tridiagonal Hamiltonians}
\label{sec:UniversalityCertificate}
%....................................................................%

Given a Hamiltonian $H$ that does not satisfy any of the conditions of Corollary~\ref{cor:Hyp2}, we provide a list of $16$ linearly independent linear combinations of nested commutators of $\tilde{\Xi}$ and $\tilde{T}\tilde{\Xi}\tilde{T}$. This shows that $\Lie{H,THT}=\mc{L}({\tilde{\Xi}, \tilde{T}\tilde{\Xi}\tilde{T}})=\u{4}$. Hence it follows from Corollary~\ref{cor:UniversalLie} that $H$ is universal.

Let $E_{k,l}\defeq\ket{k}\bra{l}$ and define a basis for $\su{4}$ (i.e., for traceless $4 \times 4$ Hermitian matrices) as follows:
\begin{align}
  X_{k,l} &\defeq E_{k,l} + E_{l,k},      & (1 \leq k < l \leq 4) \\
  Y_{k,l} &\defeq -i E_{k,l} + i E_{l,k}, & (1 \leq k < l \leq 4) \\
  Z_{k} &\defeq E_{k,k} - E_{k+1,k+1}.    & (1 \leq k \leq 3)
\end{align}
These $15$ matrices together with any Hermitian matrix with \mbox{non-zero} trace form a basis for $\u{4}$. We now obtain these basis vectors as nested commutators of $\tilde{\Xi}$ and $\tilde{T} \tilde{\Xi} \tilde{T}$.

By violation of the first condition in Corollary~\ref{cor:Hyp2}, $bdf \neq 0$. Thus we can generate $A \defeq \frac{1}{2b} i[\tilde{\Xi},\tilde{T}\tilde{\Xi}\tilde{T}]$ and
\begin{align}
  X_{1,2} &= \frac{1}{2b} (\tilde{\Xi} - \tilde{T}\tilde{\Xi}\tilde{T}), \\
  Y_{1,3} &= \frac{1}{3d} \bigl(i[i[X_{1,2}, A], X_{1,2}] - 4 A\bigr), \\
  X_{2,3} &= i[X_{1,2}, Y_{1,3}].
\end{align}
Next, we can generate $B \defeq \frac{1}{2} (\tilde{\Xi} + \tilde{T}\tilde{\Xi}\tilde{T})$.
To obtain $Y_{1,2}$ we consider three cases:
\begin{equation}
  Y_{1,2} =
  \begin{cases}
    \dfrac{1}{a-c} (d Y_{1,3} + A)                                      & \text{ if } a \neq c, \\
    \dfrac{1}{c-e} \, i[Y_{1,3}, i[B, X_{2,3}]]                         & \text{ if } c \neq e, \\
    \dfrac{1}{a-g} \dfrac{1}{f^2} \, i \left[
   i[X_{2,3}, B], i[B, i[Y_{1,3}, B]] \right] & \text{ otherwise ($a = c = e \neq g$)}.
  \end{cases}
\end{equation}
One of these cases has to hold since the second condition in Corollary~\ref{cor:Hyp2} is violated. We next obtain
\begin{align}
  X_{1,3} &= i[Y_{1,2},X_{2,3}], \\
  X_{1,4} &= \frac{1}{f} \bigl((c-e)X_{1,3} + i[A, X_{2,3}] + i[Y_{1,3}, B]\bigr).
\end{align}
We obtain the remaining basis elements as follows:
\begin{align}
  X_{2,4} &= i[X_{1,4}, Y_{1,2}], &
  X_{3,4} &= i[X_{1,4}, Y_{1,3}], &
  Y_{1,4} &= i[X_{2,4}, X_{1,2}], \\
  Y_{2,3} &= i[X_{1,3}, X_{1,2}], &
  Y_{2,4} &= i[X_{1,4}, X_{1,2}], &
  Y_{3,4} &= i[X_{1,4}, X_{1,3}], \\
  Z_1 &= \frac{1}{2} \, i[Y_{1,2}, X_{1,2}], &
  Z_2 &= \frac{1}{2} \, i[Y_{2,3}, X_{2,3}], &
  Z_3 &= \frac{1}{2} \, i[Y_{3,4}, X_{3,4}].
\end{align}
At this point we can generate \su{4}. If the third condition in Corollary~\ref{cor:Hyp2} does not hold, then $\tr(\tilde{\Xi}) \neq 0$, so adding $\tilde{\Xi}$ gives all of \u{4}.

%.............................................................................%
\subsubsection{Complete classification of \universal{2} \qubit{2} Hamiltonians}
\label{sec:result}
%.............................................................................%

Combining the result of the previous section with Corollary~\ref{cor:Hyp2} gives the following theorem.

\begin{theorem}
\label{thm:2NonUniversal}
A \mbox{two-qubit} Hamiltonian $H$ is \universal{2} if and only if it does not satisfy any of the following conditions:
\begin{enumerate}
  \item $H$ is \Tsimilar{} to a local Hamiltonian,
  \item $H$ shares an eigenvector with $T$, the gate that swaps two qubits, or
  \item $\tr(H)=0$.
\end{enumerate}
\end{theorem}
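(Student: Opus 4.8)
The plan is to prove the biconditional by handling the two directions separately. For the ``only if'' direction --- i.e., that each of the three conditions forces non-universality --- I would invoke Lemma~\ref{lem:Hyp2}, which has already been established by combining Fact~\ref{fact:Hyp1} (the three simple families are non-universal) with Theorem~\ref{thm:Tsim} ($T$-similarity preserves universality). So the contrapositive of the ``only if'' direction is immediate: if $H$ satisfies any of conditions~(1)--(3), then $H$ is non-universal, hence not universal.

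The substance is the ``if'' direction: if $H$ satisfies \emph{none} of conditions~(1)--(3), then $H$ is universal. Here the strategy is to pass to the tridiagonal form $\Xi$ of $H$, which exists and is unique by Lemma~\ref{thm:Tridiagonalization}, and is $T$-similar to $H$. By Theorem~\ref{thm:Tsim} again, $H$ is universal if and only if $\Xi$ is, and by Corollary~\ref{cor:UniversalLie} it suffices to show $\mc{L}(\tilde{\Xi},\tilde{T}\tilde{\Xi}\tilde{T}) = \u{4}$ in the $T$-basis. Now I use the translation provided by Corollary~\ref{cor:Hyp2}: in terms of the entries $a,b,c,d,e,f,g$ of $\tilde{\Xi}$ from~(\ref{eq:Tridiagonal}), the hypothesis that $H$ fails all of~(1)--(3) means precisely that $bdf \neq 0$, that $(a,c,e,g)$ are not all equal, and that $a+c+e+g \neq 0$. (One should note here that $H$ sharing an eigenvector with $T$ is equivalent to $bdf = 0$ by Lemma~\ref{lem:TridiagonalEigenvector}, and $H$ being $T$-similar to a local Hamiltonian --- \emph{given} $bdf\neq0$ --- is equivalent to $a=c=e=g$ by Lemma~\ref{lem:TridiagonalLocal}, so the three failures of Theorem~\ref{thm:2NonUniversal} correspond exactly to the three failures of Corollary~\ref{cor:Hyp2}.)

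Given those three inequalities, I would simply present the explicit chain of nested commutators of $\tilde{\Xi}$ and $\tilde{T}\tilde{\Xi}\tilde{T}$ worked out in Section~\ref{sec:UniversalityCertificate}: using $bdf\neq0$ to extract $X_{1,2}$, $A$, $Y_{1,3}$, $X_{2,3}$, then using the fact that $(a,c,e,g)$ are not all equal to pick one of the three displayed formulas for $Y_{1,2}$, then generating $X_{1,3}$, $X_{1,4}$ and the rest of the $\su{4}$ basis by the listed bracket identities, and finally adding $\tilde{\Xi}$ itself (whose trace $a+c+e+g$ is nonzero) to climb from $\su{4}$ to $\u{4}$. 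This shows the generated Lie algebra is all of $\u{4}$, so $\Xi$ --- and hence $H$ --- is universal.

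The main obstacle is not conceptual but bookkeeping: one must verify that the $16$-step commutator certificate is valid, i.e., that every denominator appearing ($b$, $d$, $f$, $a-c$ or $c-e$ or $a-g$, and $a+c+e+g$) is guaranteed nonzero under the stated hypotheses, and that the claimed bracket identities actually hold --- a finite but tedious matrix computation. The one genuinely delicate point is the three-case definition of $Y_{1,2}$: one must check that the three cases $a\neq c$, $c\neq e$, $a=c=e\neq g$ exhaust all possibilities once $a=c=e=g$ is excluded (they do, by a short logical argument), and that in the last case $f\neq0$ so the $1/f^2$ factor is legitimate (which follows from $bdf\neq0$). Everything else reduces to routine verification that can be left to direct calculation.
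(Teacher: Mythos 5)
Your proposal is correct and matches the paper's own argument essentially step for step: the ``only if'' direction is exactly Lemma~\ref{lem:Hyp2}, and the ``if'' direction passes to the tridiagonal form via Lemmas~\ref{thm:Tridiagonalization}--\ref{lem:TridiagonalLocal} and Corollary~\ref{cor:Hyp2}, then appeals to the explicit commutator certificate of Section~\ref{sec:UniversalityCertificate}. You also correctly flag the one subtle logical point — that Lemma~\ref{lem:TridiagonalLocal} is conditional on $bdf\neq0$, and that the three cases in the $Y_{1,2}$ formula exhaust the complement of $a=c=e=g$ — which the paper leaves implicit.
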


These conditions are easy to check by computing $\tilde{\Xi}$ and applying Corollary~\ref{cor:Hyp2}.

%%%%%%%%%%%%%%%%%%%%%%%%%%%%%%%%%%%%%%%
\section{\nonuniversal{3} Hamiltonians}
\label{sec:3Universal}
%%%%%%%%%%%%%%%%%%%%%%%%%%%%%%%%%%%%%%%

It turns out that there are \nonuniversal{2} \qubit{2} Hamiltonians that are nevertheless \universal{3}. In fact, numerical evidence suggests that almost any traceless \nonuniversal{2} Hamiltonian is \universal{3}. We do not know a complete characterization of \universal{3} \qubit{2} Hamiltonians. However, the following are sufficient conditions for a \qubit{2} Hamiltonian to be \nonuniversal{3}.

\begin{lemma}
\label{lem:3NonUniversal}
A \qubit{2} Hamiltonian $H$ is \nonuniversal{3} if any of the following conditions holds:
\begin{enumerate}
  \item $H$ is a local Hamiltonian,
  \item $H$ has an eigenvector of the form $\ket{a}\!\ket{a}$ for some $\ket{a}\in\C^2$,
  \item $\tr(H)=0$,
  \item $H=r\id_4+(U\otimes U) A (U\otimes U)\ct$ for some $r\in\R$, $U\in\U{2}$, and some antisymmetric Hamiltonian $A\in\u{4}$ ($A$ is antisymmetric if $A\tp=-A$),
  \item $[H,U\otimes U]=0$ for some $U\in\U{2}$ with distinct eigenvalues.
\end{enumerate}
\end{lemma}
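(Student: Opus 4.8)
The plan is to establish each of the five sufficient conditions by exhibiting, for every Hamiltonian $H$ of the stated form, a proper nontrivial invariant subspace (or invariant subalgebra structure) that is simultaneously preserved by $H$ applied to \emph{every} ordered pair among the three qubits. By Corollary~\ref{cor:UniversalLie} it suffices to show that the Lie algebra $\Lie{\set{P(H\otimes I)P\ct : P\in\SQ{3}}}$ is strictly contained in $\u{8}$; equivalently, it suffices to find a nonzero Hermitian operator orthogonal (in the trace inner product) to that Lie algebra, or a subspace/constraint that all generators respect. I would handle the conditions roughly in increasing order of difficulty, reusing structure where possible.

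Conditions 1 and 3 are essentially immediate. For condition~1, if $H=H_1\otimes I+I\otimes H_2$ is local, then each $P(H\otimes I)P\ct$ is a single-qubit Hamiltonian (a sum of terms acting on one qubit), so the generated Lie algebra consists only of single-qubit operators and is far from all of $\u{8}$; indeed this is just the $n=3$ instance of the observation behind Fact~\ref{fact:Hyp1}(1). For condition~3, the trace-zero property is preserved under the tensor-with-identity, conjugation by permutations, linear combination, and commutator operations (commutators are always traceless), so every element of the generated Lie algebra is traceless, hence it misses $I^{\otimes 3}$ and cannot be $\u{8}$. Condition~4 is the analogue of the ``generator of orthogonal transformations'' obstruction: an antisymmetric $A$ generates (within a fixed basis) the Lie algebra $\mathfrak{so}(4)$, and conjugating by the symmetric-under-swap operator $U\otimes U$ keeps the relevant structure; I would argue that the whole generated algebra preserves the symmetric bilinear form defined by that basis on each pair of qubits, so it lies in a proper subalgebra (a copy of $\mathfrak{so}(8)$ up to the scalar shift $r I$, which only contributes the trivial center). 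The shift by $r\id_4$ is harmless since it only adds a multiple of the identity on the relevant pair.

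The interesting cases are 2 and 5, which I would treat together since condition~2 is essentially the ``diagonal'' special case of condition~5. If $H$ has an eigenvector $\ket{a}\ket{a}$, then when $H$ acts on qubits $j,k$ the vector $\ket{a}_j\ket{a}_k\ket{a}_\ell$ (where $\ell$ is the third qubit) is an eigenvector; but this must hold simultaneously for all three choices of pair, so $\ket{a}^{\otimes 3}$ is a common eigenvector of all generators, hence of the whole Lie algebra, which is therefore not $\u{8}$ (since $\u{8}$ has no common eigenvector). More generally, if $[H,U\otimes U]=0$ for some $U\in\U 2$ with distinct eigenvalues, then $U^{\otimes 3}$ commutes with every $P(H\otimes I)P\ct$: conjugation by a permutation $P$ sends $U\otimes U\otimes I$ to $U\otimes U$ on the permuted pair, and since $U^{\otimes 3}=U^{\otimes 2}_{\{j,k\}}\cdot U_\ell$ and $U_\ell$ commutes with anything on qubits $j,k$, we get $[U^{\otimes 3},P(H\otimes I)P\ct]=0$ for all $P$. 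Hence $U^{\otimes 3}$ commutes with the entire generated Lie algebra; but because $U$ has distinct eigenvalues, $U^{\otimes 3}$ is not a scalar (it has at least two distinct eigenvalues, e.g. its two eigenspaces are the symmetric and the rest, or more simply its eigenvalues are products $\mu_i\mu_j\mu_k$ which are not all equal), so the generated algebra is contained in the proper subalgebra of operators commuting with $U^{\otimes 3}$, and is not $\u 8$.

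I expect the main obstacle to be condition~4: making precise that the conjugated antisymmetric family generates only (a scalar extension of) an orthogonal subalgebra on each pair, and then checking that assembling these across all three pairs still fails to generate $\u 8$ rather than accidentally spanning everything. The key point to verify carefully is that the relevant symmetric bilinear form used on qubits $\{j,k\}$, namely the one making $(U\otimes U)A(U\otimes U)\ct$ antisymmetric, extends consistently (via tensoring with the form $U U\tp$ on the third qubit, or simply by noting $U^{\otimes 3}(U^{\otimes 3})\tp$ is the conjugating symmetric form) to a single nondegenerate symmetric form on $\C^8$ preserved by all generators; once that is set up, the generated algebra lies in a conjugate of $\mathfrak{so}(8)\oplus\R I$, which is proper. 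The other four conditions are short and mostly bookkeeping with invariant vectors, commuting operators, and trace.
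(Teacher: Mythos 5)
The paper offers no proof of Lemma~\ref{lem:3NonUniversal}; it simply asserts that ``it is easy to see that these Hamiltonians are indeed \nonuniversal{3}.'' Your proposal supplies the details that the paper leaves implicit, and your overall strategy---exhibit, for every $H$ of each stated form, an obstruction (common eigenvector, conserved trace, commuting operator, or invariant bilinear form) respected simultaneously by $P(H\otimes I)P^\dagger$ for all $P\in\SQ{3}$, and conclude via Corollary~\ref{cor:UniversalLie} that the generated Lie algebra is a proper subalgebra of $\u{8}$---is exactly the intended one. Your handling of conditions 1, 2, 3, and 5 is correct: note in particular that for condition~2 the state $\ket{a}^{\otimes 3}$ is preserved because it is symmetric under exchange, so both $H$ and $THT$ on any pair fix it, and a common eigenvector is inherited by the entire Lie algebra (commutators act as $0$ on it); and for condition~5 the operator $U^{\otimes 3}$ indeed commutes with all generators and is non-scalar since $\mu_1\neq\mu_2$ implies $\mu_1^3\neq\mu_1^2\mu_2$. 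Be a little careful, though, about your passing claim that condition~2 is a special case of condition~5: an eigenvector $\ket a\ket a$ does not by itself give a $U\otimes U$ commuting with $H$, so these really are independent arguments (which you in fact give).

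For condition~4 the idea is right but the form you wrote down is off by an inverse. If $V\defeq U\otimes U$, the symmetric unitary form that makes $H_0\defeq VAV^\dagger$ ``antisymmetric'' in the sense $H_0^{\mathsf T}S=-SH_0$ is $S=(VV^{\mathsf T})^{-1}=\bar V V^\dagger$, not $VV^{\mathsf T}$ (one checks directly that $H_0^{\mathsf T}S=(V^{\mathsf T})^{-1}(-A)V^{\mathsf T}S$ and $-SH_0=-(V^{\mathsf T})^{-1}V^{-1}VAV^{-1}$ agree precisely for this $S$). The extension to three qubits is then $S_{123}=((UU^{\mathsf T})^{-1})^{\otimes 3}$, which is permutation-symmetric, so the argument goes through. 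Perhaps the cleanest way to present this is to avoid the bilinear form entirely: conjugating every generator by $(U\otimes U\otimes U)^\dagger$ sends $P(H\otimes I)P^\dagger$ to $r\id_8+P(A\otimes I)P^\dagger$, whose non-scalar part is real and antisymmetric, so the generated Lie algebra lies in $\R\id_8\oplus\mathfrak{so}(8)$, of real dimension $29<64=\dim\u 8$. With that small correction the proof is complete and matches what the paper calls ``easy to see,'' filling in the details the authors chose to omit.
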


It is easy to see that these Hamiltonians are indeed \nonuniversal{3}. In fact, they are also \nonuniversal{n} for all $n \geq 3$. Therefore, if one could show that this list is complete, then it would provide a complete characterization of \universal{n} \qubit{2} Hamiltonians.

Recall that a \nonuniversal{3} \qubit{2} Hamiltonian is also \nonuniversal{2} (see Lemma~\ref{lem:Universal}). For each family $\mc{F}_3$ of \nonuniversal{3} Hamiltonians in Lemma~\ref{lem:3NonUniversal} there is a family $\mc{F}_2$ of \nonuniversal{2} Hamiltonians from Theorem~\ref{thm:2NonUniversal} sucht that $\mc{F}_3\subseteq\mc{F}_2$ (see Figure~\ref{fig:Diagram}).

\begin{figure}[ht]
  \centering
  \includegraphics[width=.75\textwidth]{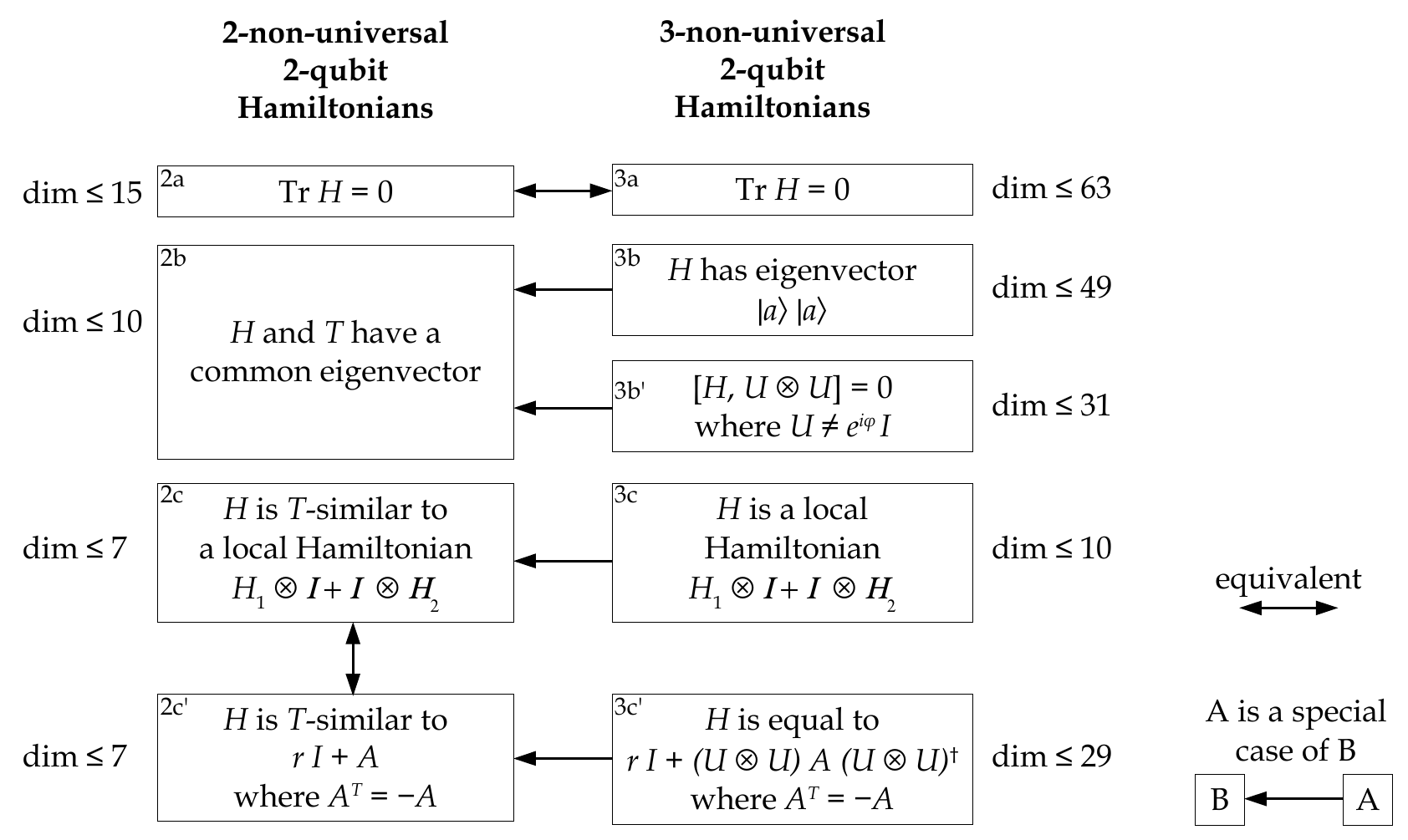}
  \caption[Relations between \nonuniversal{2} and \nonuniversal{3} Hamiltonians]{Relations between the families of \nonuniversal{2} and \nonuniversal{3} Hamiltonians. For each family $\mc{F}$ we give the maximum achievable value of $\dim \Lie{H,THT}$ for $H \in \mc{F}$.}
  \label{fig:Diagram}
\end{figure}

%%%%%%%%%%%%%%%%%%%%%%%
\section{Open problems}
\label{sec:Conclusion}
%%%%%%%%%%%%%%%%%%%%%%%

The main result of this paper is a complete characterization of \universal{2} \qubit{2} Hamiltonians, as summarized in Theorem \ref{thm:2NonUniversal}. Several variants of the problem that remain open:
\begin{enumerate}
  \item Which \nonuniversal{2} \qubit{2} Hamiltonians are \universal{n}, i.e., become universal on $n \geq 3$ qubits? Also, is there $n_0 \in \N$ such that \mbox{$n_0$-non-universality} implies \mbox{$n$-non-universality} for all $n \geq n_0$? In particular, is $n_0 = 3$?
  \item Which \qubit{2} Hamiltonians are universal with ancillae? (See Section~\ref{sec:Ancillae} and Definition \ref{def:UniversalAncilla} for the definition of universality with ancillae for unitary gates.) One might also consider a scenario in which the number of allowed ancillary qubits is restricted.
  \item Which \qubit{2} Hamiltonians give us encoded universality, e.g., generate \O{4}? This question is relevant since the full power of quantum computation can be achieved even with a restricted repertoire of gates. For example, real gates are sufficient \cite{Rebit,Bernstein} since \O{2 \cdot 2^n} contains \U{2^n}. We can say that $H$ is \emph{\universal{n} in an encoded sense} if there exists $k \in \N$ (possibly depending on $n$) such that the Lie algebra generated by $H$ on $n+k$ qubits, $\mc{L} \subseteq \u{2^{n+k}}$, contains \u{2^n} as a subalgebra. However, it is not even clear how to check this for a \emph{particular} Hamiltonian.
\end{enumerate}

%%%%%%%%%%%%%%%%%%%%%%%%%%
\section*{Acknowledgments}
%%%%%%%%%%%%%%%%%%%%%%%%%%

We thank John Watrous for helpful discussions. Support for this research was provided by MITACS, NSERC, QuantumWorks, and the US ARO/DTO. DL also acknowledges support from CRC, CFI, ORF, and CIFAR.

\appendix

%%%%%%%%%%%%%%%%%%%%%%%%%%%%%%%%%%%%%%%%%%%%%%%
\section{Positive time evolution is sufficient}
\label{app:NegativeTime}
%%%%%%%%%%%%%%%%%%%%%%%%%%%%%%%%%%%%%%%%%%%%%%%

\begin{claim}
\label{cl:NegativeTime}
Let $H\in\u{N}$ be a Hamiltonian and let $\tau<0$. Then for all $\varepsilon>0$ there exists $t > 0$ such that $\opnorm{\eto{H\tau}-\eto{Ht}}<\varepsilon$.
\end{claim}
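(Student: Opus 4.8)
The plan is to reduce the claim to a recurrence statement — that the evolution $\eto{Ht}$ comes arbitrarily close to the identity at arbitrarily large positive times $t$ — and then to prove that recurrence using compactness of the unitary group (equivalently, simultaneous Diophantine approximation of the eigenvalues of $H$).

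First I would record the easy reduction. Suppose we know that for every $\varepsilon>0$ and every $T_0>0$ there is a $t'>T_0$ with $\opnorm{\eto{Ht'}-\id}<\varepsilon$. Applying this with $T_0\defeq-\tau>0$ produces such a $t'$, and then $t\defeq t'+\tau>0$. Since $\eto{Ht}=\eto{Ht'}\eto{H\tau}$ (powers of $H$ commute) and the spectral norm is unitarily invariant,
\begin{equation}
  \opnorm{\eto{H\tau}-\eto{Ht}}
  = \opnorm{(\id-\eto{Ht'})\,\eto{H\tau}}
  = \opnorm{\id-\eto{Ht'}} < \varepsilon ,
\end{equation}
which is exactly the assertion of the claim. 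So everything comes down to the recurrence statement.

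For the recurrence, I would argue as follows. The set $\set{\eto{Ht}:t\in\R}$ lies in the compact group $\U{N}$, so the sequence $\bigl(\eto{Hn}\bigr)_{n\in\N}$ has a convergent, hence Cauchy, subsequence: given $\varepsilon>0$ there is a $K$ with $\opnorm{\eto{Hp}-\eto{Hq}}<\varepsilon$ for all sufficiently large members $p,q$ of the subsequence, and by fixing $q$ and taking $p$ far enough along we can also ensure $p-q>T_0$. Setting $t'\defeq p-q>0$ and using $\eto{Hp}=\eto{Ht'}\eto{Hq}$ together with unitary invariance of the norm gives $\opnorm{\eto{Ht'}-\id}<\varepsilon$. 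Equivalently, one could diagonalize $H=\sum_k\lambda_k\Pi_k$ with real $\lambda_k$ and orthogonal spectral projections $\Pi_k$, note that $\opnorm{\eto{Ht'}-\id}=\max_k\abs{e^{-i\lambda_k t'}-1}$, and then invoke a pigeonhole/Kronecker argument on the orbit $\set{(\lambda_1 n,\dotsc,\lambda_N n)\bmod 2\pi}$ in the torus $(\R/2\pi\Z)^N$; this version makes the recurrence fully explicit.

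The reduction in the second paragraph is purely formal, so the only real content is the recurrence step, which I expect to be the crux of the argument — although it is short and standard, the essential point being simply that a one-parameter subgroup of a compact group returns near the identity at arbitrarily large parameter values.
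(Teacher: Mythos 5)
Your proposal is correct and takes essentially the same route as the paper: both extract a convergent (hence Cauchy) subsequence from $(\eto{Hn})_{n\in\N}$ via compactness/Bolzano--Weierstrass, pass to a difference $p-q$ large enough to exceed $-\tau$, and use unitary invariance of the spectral norm to conclude. The only difference is organizational — you isolate the recurrence statement as a preliminary reduction, while the paper folds it into a single flow — and your mention of the Kronecker alternative is an aside not used in either argument.
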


\begin{proof}
Let $U\defeq\eto{H}$. Consider the sequence $\mc{K}\defeq\set{U^i}_{i=1}^{\infty}\subset\M{N}{\C}$. Note that we can think of $\M{N}{\C}$ as a real vector space of dimension $2N^2$. Since $\mc{K}$ is bounded with respect to the spectral norm, by the Bolzano-Weierstrass theorem, $\mc{K}$ has a convergent subsequence. It follows that for all $\varepsilon>0$ and all $n_0\in\N$ there exist $j,k\in\N$ such that $j-k>n_0$ and $\varepsilon>\opnorm{U^k-U^j}=\opnorm{\id_N-U^{j-k}}$. Equivalently, for all $\varepsilon>0$ and all $n_0\in\N$ there exists $n>n_0$ such that $\varepsilon > \opnorm{\id_N-U^n}$. Therefore, given $\tau<0$, for all $\varepsilon>0$ there exists $n >\abs{\tau}$ such that
\begin{equation}
  \varepsilon>\opnorm{\id_N-\eto{Hn}}=\opnorm{\eto{H\tau}-\eto{H(n+\tau)}}.
\end{equation}
Taking $t\defeq n+\tau>0$, the claim follows.
\end{proof}

%%%%%%%%%%%%%%%%%%%%%%%%%%%%%%%%%%%%%%%%%%%%%%%%%%%%%%%%%%%%%%
\section{Basic properties of the \texorpdfstring{$T$}{T} gate}
\label{app:Tgate}
%%%%%%%%%%%%%%%%%%%%%%%%%%%%%%%%%%%%%%%%%%%%%%%%%%%%%%%%%%%%%%

In this appendix we restate and prove the basic properites of the $T$ gate introduced in Section \ref{sec:T}.

\CommutesWithTFact*

\begin{proof}
Suppose $\ket{s}$ is an eigenvector of $N$. Then $\mc{B}=\set{\ket{s},\ket{n_1},\ket{n_2},\ket{n_3}}$ is an orthonormal eigenbasis of $N$ for some orthonormal vectors $\set{\ket{n_i}}_{i=1}^3\subset\C^4$. Since $\mc{B}$ is orthonormal, $\set{\ket{n_i}}_{i=1}^3\in E_-^{\perp}=E_+$. Therefore, $\mc{B}$ is also an eigenbasis of $T$, and both $N$ and $T$ are simultaneously diagonal in this basis. Thus $[N,T]=0$.

Conversely, suppose $[N,T]=0$. Then $N$ and $T$ are simultaneously diagonal in some orthonormal basis $\mc{B}$. Since $\ket{s}$ spans the one-dimensional eigenspace $E_-$ of $T$, we know that $e^{i\phi}\ket{s}\in\mc{B}$ for some $\phi\in\R$. Thus, $\ket{s}$ is an eigenvector of $N$.
\end{proof}

\OrthogonalEigenvectorFact*

\begin{proof}
The ``if'' direction is trivial. Conversely, suppose $N$ shares an eigenvector $\ket{v}$ with the $T$ gate. If $\ket{v}\in E_+$ we are done. Otherwise, $\ket{v}\in E_-$, so $\ket{v}=\ket{s}$. Then $\spn_{\C}(\ket{v})^\perp = E_+$ is an invariant subspace of $N$ and it contains an eigenvector of $N$.
\end{proof}

\SingletEigenvecFact*

\begin{proof}
Since $[U,T]=0$, we know that $U$ and $T$ are simultaneously diagonal in some orthonormal basis. The singlet $\ket{s}$ must belong to this basis, since it spans the \mbox{one-dimensional} eigenspace $E_-$ of the $T$ gate. Therefore, $\ket{s}$ has to be an eigenvector of $U$ as well. Note that $U$ and $U\ct$ have the same eigenvectors. Thus, $\ket{s}$ is also an eigenvector of $U\ct$.
\end{proof}

\begin{fact}
\label{fact:DegEigenvals}
If a \qubit2{} Hamiltonian $H$ has a degenerate eigenvalue, then it shares an eigenvector with $T$ and hence is not universal.
\end{fact}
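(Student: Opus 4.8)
The plan is to reduce this to the dimension-counting observation that a large eigenspace of $H$ must meet the hyperplane $\ket{s}^\perp$, and then invoke Fact~\ref{fact:OrthogonalEigenvector} together with Fact~\ref{fact:Hyp1}. Concretely, since $H \in \u{4}$ has a degenerate eigenvalue, it has an eigenspace $W \subseteq \C^4$ with $\dim W \geq 2$. The singlet $\ket{s}$ spans a one-dimensional subspace, so $\ket{s}^\perp$ has dimension $3$. Then $\dim W + \dim \ket{s}^\perp \geq 2 + 3 = 5 > 4$, so $W \cap \ket{s}^\perp \neq \set{0}$.

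Next I would pick any nonzero vector $\ket{v} \in W \cap \ket{s}^\perp$. By construction $\ket{v}$ lies in the eigenspace $W$, so it is an eigenvector of $H$, and it is orthogonal to $\ket{s}$. Since $H$ is Hermitian, it is in particular normal, so Fact~\ref{fact:OrthogonalEigenvector} applies: having an eigenvector orthogonal to $\ket{s}$ is equivalent to having a common eigenvector with the $T$ gate. Hence $H$ shares an eigenvector with $T$.

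Finally, by item~2 of Fact~\ref{fact:Hyp1} (equivalently, condition~2 of Theorem~\ref{thm:2NonUniversal}), any \qubit{2} Hamiltonian sharing an eigenvector with $T$ is non-universal, which completes the argument.

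I do not expect any real obstacle here: the whole proof is the elementary fact that two subspaces of $\C^4$ whose dimensions sum to more than $4$ intersect nontrivially, followed by citing the already-established Facts. The only thing to be careful about is to phrase the intersection step for the exact eigenspace of the degenerate eigenvalue (so that the vector we extract is genuinely an eigenvector of $H$, not merely in some invariant subspace), and to note explicitly that Hermitian matrices are normal so that Fact~\ref{fact:OrthogonalEigenvector} is applicable.
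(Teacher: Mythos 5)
Your proof is correct and follows essentially the same dimension-counting argument as the paper: both intersect the ($\geq 2$)-dimensional degenerate eigenspace of $H$ with a $3$-dimensional subspace of $\C^4$ to extract a common eigenvector. The only cosmetic difference is that you phrase the $3$-dimensional subspace as $\ket{s}^\perp$ and invoke Fact~\ref{fact:OrthogonalEigenvector}, whereas the paper notes directly that $\ket{s}^\perp = E_+$ is the $(+1)$-eigenspace of $T$, so any nonzero vector in the intersection is already a simultaneous eigenvector of $H$ and $T$; these are the same observation.
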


\begin{proof}
Suppose $H$ has a degenerate eigenvalue, and let $E$ denote the corresponding eigenspace. Recall that the $T$ gate has a \mbox{3-dimensional} \mbox{$(+1)$-eigenspace} $E_+$. Now note that the intersection $E \cap E_+$ is at least \mbox{1-dimensional}, since $E,E_+\subseteq \C^4$ and $\dim(E)\geq 2$, $\dim(E_+)=3$. Any nonzero $\ket{v}\in E \cap E_+$ is a common eigenvector of $H$ and the $T$ gate. By Fact~\ref{fact:Hyp1} we conclude that $H$ is \mbox{non-universal}.
\end{proof}

%%%%%%%%%%%%%%%%%%%%%%%%%%%%%%%%%%%%%%%%%%%%%%%
\section{\Tsimilar{ity} to a local Hamiltonian}
\label{app:Tsimilarlocal}
%%%%%%%%%%%%%%%%%%%%%%%%%%%%%%%%%%%%%%%%%%%%%%%

In this appendix we prove a result characterizing Hamiltonians that are $T$-similar to some local Hamiltonian, as stated in Section \ref{sec:TsimTransformations}. Our proof makes use of the following general characterization of \Tsimilar{}ity:

\begin{theorem}
\label{thm:Tsimilarity}
Hamiltonians $H$ and $H'$ are \Tsimilar{} if and only if there exist orthonormal eigenbases $\{\ket{v_1},\ket{v_2},\ket{v_3},\ket{v_4}\}$ of $H$ and $\{\ket{w_1},\ket{w_2},\ket{w_3},\ket{w_4}\}$ of $H'$ such that $\bra{v_i}H\ket{v_i}=\bra{w_i}H'\ket{w_i}$ and $|{\braket{v_i}{s}}|=|{\braket{w_i}{s}}|$ for all $i \in \{1,2,3,4\}$,
where $\ket{s}$ is the singlet state defined in equation (\ref{eq:Singlet}).
\end{theorem}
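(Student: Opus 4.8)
The plan is to prove both implications through Fact~\ref{fact:CommutesWithT}, which for a unitary $P$ converts the constraint $[P,T]=0$ into the statement that $\ket{s}$ is an eigenvector of $P$, i.e.\ $P\ket{s}=e^{i\phi}\ket{s}$ for some $\phi\in\R$ (equivalently $P\ct\ket{s}=e^{-i\phi}\ket{s}$). Thus both sides of the claimed equivalence get translated into statements about how a unitary interacts with the single vector $\ket{s}$.

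For the ``only if'' direction, suppose $H'=PHP\ct$ with $[P,T]=0$. First I would fix any orthonormal eigenbasis $\ket{v_1},\dotsc,\ket{v_4}$ of $H$, with $H\ket{v_i}=\lambda_i\ket{v_i}$, and define $\ket{w_i}\defeq P\ket{v_i}$; a one-line check gives $H'\ket{w_i}=\lambda_i\ket{w_i}$, so $\bra{v_i}H\ket{v_i}=\bra{w_i}H'\ket{w_i}=\lambda_i$. For the overlaps, $\braket{w_i}{s}=\bra{v_i}P\ct\ket{s}=e^{-i\phi}\braket{v_i}{s}$, hence $\abs{\braket{w_i}{s}}=\abs{\braket{v_i}{s}}$.

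For the ``if'' direction, the main point is that, although the individual overlaps $\abs{\braket{v_i}{s}}$ are not basis-independent inside a degenerate eigenspace, their sums over each eigenspace are. Let $\lambda_1,\dotsc,\lambda_m$ be the distinct eigenvalues of $H$ and let $\Pi_j$ project onto $V_j\defeq\ker(H-\lambda_j\id)$; by the eigenvalue hypothesis the distinct eigenvalues of $H'$ are the same, with projectors $\Pi_j'$ onto $W_j\defeq\ker(H'-\lambda_j\id)$. Grouping the hypothesized eigenbases by eigenvalue shows $\dim V_j=\dim W_j$, and summing $\abs{\braket{v_i}{s}}^2=\abs{\braket{w_i}{s}}^2$ within each group gives $\norm{\Pi_j\ket{s}}=\norm{\Pi_j'\ket{s}}$ for all $j$. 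Then for each $j$ I would choose a unitary $P_j\colon V_j\to W_j$ with $P_j\bigl(\Pi_j\ket{s}\bigr)=\Pi_j'\ket{s}$, which exists because these two vectors have equal norm and the spaces have equal dimension (if both vectors vanish, take any unitary; otherwise send the unit vector along $\Pi_j\ket{s}$ to the one along $\Pi_j'\ket{s}$ and extend to orthonormal bases). Setting $P\defeq\bigoplus_j P_j$ yields a unitary with $PHP\ct=H'$, since $P$ maps the $\lambda_j$-eigenspace of $H$ onto that of $H'$, and with $P\ket{s}=\sum_j P_j\Pi_j\ket{s}=\sum_j\Pi_j'\ket{s}=\ket{s}$; by Fact~\ref{fact:CommutesWithT} this gives $[P,T]=0$, so $H$ and $H'$ are \Tsimilar{}.

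The only delicate point is the one just flagged: in the ``if'' direction one must resist matching the given eigenbases vector by vector and instead reassemble $P$ out of eigenspace blocks chosen with a common phase (here, all equal to $1$), so that the blocks fit together into a map fixing $\ket{s}$; this is precisely what forces $[P,T]=0$. Everything else is routine linear algebra.
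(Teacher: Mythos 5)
Your proof is correct. The ``only if'' direction is essentially identical to the paper's (the paper invokes Fact~\ref{fact:SingletEigenvec} where you invoke Fact~\ref{fact:CommutesWithT}, but the computation is the same). The ``if'' direction, however, takes a genuinely different route. The paper simply defines
\begin{equation}
  U \defeq \sum_{j=1}^4 e^{i(\beta_j-\alpha_j)}\ket{w_j}\bra{v_j},
\end{equation}
where $\alpha_j$ and $\beta_j$ are the phases appearing in the expansions $\ket{s}=\sum_j r_j e^{i\alpha_j}\ket{v_j}=\sum_j r_j e^{i\beta_j}\ket{w_j}$, and then verifies directly that $UHU\ct=H'$ and $\bra{s}U\ket{s}=\sum_j r_j^2 = 1$ (hence $U\ket{s}=\ket{s}$ and $[U,T]=0$ by Fact~\ref{fact:CommutesWithT}). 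In other words the paper \emph{does} match the bases vector by vector; degeneracies cause no trouble because the phases $e^{i(\beta_j-\alpha_j)}$ are chosen precisely so that $U\ket{s}=\ket{s}$, regardless of how the eigenbasis was picked inside any degenerate block. Your proof instead passes to eigenspace projectors, derives $\norm{\Pi_j\ket{s}}=\norm{\Pi_j'\ket{s}}$, and glues block unitaries $P_j\colon V_j\to W_j$ sending $\Pi_j\ket{s}$ to $\Pi_j'\ket{s}$. This is a clean, coordinate-free alternative that makes the role of the spectral decomposition transparent and avoids explicit phase bookkeeping; the paper's construction is shorter and more explicit. The one thing I would push back on is your closing remark that one ``must resist matching the given eigenbases vector by vector'': the paper shows that vector-by-vector matching is fine once the relative phases are set correctly, and your own block unitaries $P_j$ could equally well be presented in that form. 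Both arguments are complete and correct.
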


\begin{proof}
Assume $H$ and $H'$ are \Tsimilar{}, i.e., $H'=U H U\ct$ for some $U\in\uF$ with $[U,T]=0$. Since $[U,T]=0$, by Fact~\ref{fact:SingletEigenvec} we know that $\ket{s}$ is an eigenvector of $U\ct$. Let $\ket{v}$ be an eigenvector of $H$. Then $U\ket{v}$ is the corresponding eigenvector of $U H U\ct$. Now we have $\abs{\bra{s}(U\ket{v})}=\abs{(U\ct\ket{s})\ct\ket{v}} = \abs{\braket{s}{v}}$, i.e., the corresponding eigenvectors of $H$ and $U H U\ct$ have the same overlaps with the singlet state. Since conjugation does not change the eigenvalues, the ``only if'' direction of the theorem follows.

Conversely, assume that $H$ and $H'$ have orthonormal eigenbases $\{\ket{v_i}\}$ and $\{\ket{w_i}\}$, respectively, with $\lambda_i \defeq \bra{v_i}H\ket{v_i}=\bra{w_i}H'\ket{w_i}$ and $r_i \defeq |{\braket{v_i}{s}}|=|{\braket{w_i}{s}}|$. We can express the singlet state $\ket{s}$ in the eigenbases of $H$ and $H'$ as follows:
\begin{align}
  \ket{s} &= \sum_{j=1}^4 r_j e^{i\alpha_j} \ket{v_j}\label{eq:sv} \\
          &= \sum_{j=1}^4 r_j e^{i\beta_j} \ket{w_j},\label{eq:sw}
\end{align}
where $\alpha_j,\beta_j\in\R$. Now let
\begin{equation}
  U\defeq\sum_{j=1}^4 e^{i(\beta_j-\alpha_j)}\ket{w_j}\bra{v_j}.
\label{eq:U}
\end{equation}

We claim that (a) $U H U\ct = H'$ and (b) $\abs{\bra{s}U \ket{s}} = 1$.
\begin{enumerate}[(a)]
  \item By expressing $U$ as in (\ref{eq:U}), we have
    \begin{align}
      U H U\ct = & \sum_{j=1}^4 e^{i(\beta_j-\alpha_j)}\ket{w_j}\bra{v_j} \sum_{k=1}^4 \lambda_k \ket{v_k}\bra{v_k}
                   \sum_{l=1}^4 e^{-i(\beta_l-\alpha_l)}\ket{v_l}\bra{w_l}\\
               = & \sum_{k=1}^4 e^{i(\beta_k-\alpha_k)} \lambda_k e^{-i(\beta_k-\alpha_k)}\ket{w_k}\bra{w_k}
               =   \sum_{k=1}^4 \lambda_k \ket{w_k}\bra{w_k} = H'.
    \end{align}
  \item By expressing $\bra{s}$ as in (\ref{eq:sw}), $\ket{s}$ as in (\ref{eq:sv}), and $U$ as in (\ref{eq:U}), we get
    \begin{align}
      \bra{s} U \ket{s} =  & \sum_{j=1}^4 r_j e^{-i\beta_j}\bra{w_j} \sum_{k=1}^4 e^{i(\beta_k-\alpha_k)}\ket{w_k}\bra{v_k}
                                                                     \sum_{l=1}^4 r_l e^{i\alpha_l}\ket{v_l} \\
                        = & \sum_{k=1}^4 r_k e^{-i\beta_k} e^{i(\beta_k-\alpha_k)}  r_k e^{i\alpha_k} = \sum_{k=1}^4 r_k^2=1.
    \end{align}
\end{enumerate}

Part (a) tells us that $H$ and $H'$ are similar via $U$. From (b) it follows that $\ket{s}$ is an eigenvector of $U$, so by Fact~\ref{fact:CommutesWithT}, $U$ commutes with $T$. Hence $H$ and $H'$ are \Tsimilar{}.
\end{proof}

Now we can prove the result stated in Section \ref{sec:TsimTransformations} characterizing \Tsimilar{}ity to a local Hamiltonian:

\PatternLemma*

\begin{proof}
First we prove the ``only if'' direction. Suppose $H$ is \Tsimilar{} to some local Hamiltonian $H' = H_1 \otimes \id + \id \otimes H_2$. By Theorem~\ref{thm:Tsimilarity}, it suffices to show that the spectrum of $H'$ has the form described in the lemma. We diagonalize $H_1$ and $H_2$ as follows:
\begin{equation}
  H_1 = \alpha_1 \ket{v_1} \bra{v_1} + \alpha_2 \ket{v_2} \bra{v_2}, \quad
  H_2 = \beta_1  \ket{w_1} \bra{w_1} + \beta_2  \ket{w_2} \bra{w_2}.
\end{equation}
Let the first eigenvectors of $H_1$ and $H_2$ be
\begin{equation}
  \ket{v_1} = \mx{a\\b}, \quad
  \ket{w_1} = \mx{c\\d}.
  \label{eq:v1w1}
\end{equation}
Since we can ignore the global phase of each eigenvector, we may assume that
\begin{equation}
  \ket{v_2} = \mx{-b\co \\ a\co}, \quad
  \ket{w_2} = \mx{-d\co \\ c\co}.
  \label{eq:v2w2}
\end{equation}
Then
\begin{equation}
  \ket{v_1} \otimes \ket{w_1}, \quad
  \ket{v_2} \otimes \ket{w_2}, \quad
  \ket{v_1} \otimes \ket{w_2}, \quad
  \ket{v_2} \otimes \ket{w_1}
  \label{eq:TensorEigenvectors}
\end{equation}
are eigenvectors of $H'$. Calculating the overlaps with $\ket{s}$, we find
\begin{align}
  \abs{\braket{s}{v_1,w_1}}^2 & = \tfrac{1}{2} \abs{a d - b c}^2 \qefed r, \label{eq:PseudoScalarProduct} \\
  \abs{\braket{s}{v_2,w_2}}^2 & = \tfrac{1}{2} \abs{a\co d\co - b\co c\co}^2 = \tfrac{1}{2} \abs{a d - b c}^2 = r, \\
  \abs{\braket{s}{v_1,w_2}}^2 & = \tfrac{1}{2} \abs{a c\co + b d\co}^2 \qefed t, \label{eq:ScalarProduct} \\
  \abs{\braket{s}{v_2,w_1}}^2 & = \tfrac{1}{2} \abs{-a\co c - b\co d}^2 = \tfrac{1}{2} \abs{a c\co + b d\co}^2 = t.
\end{align}
The corresponding eigenvalues of $H'$ are
\begin{equation}
  \lambda_1 = \alpha_1 + \beta_1, \quad
  \lambda_2 = \alpha_2 + \beta_2, \quad
  \lambda_3 = \alpha_1 + \beta_2, \quad
  \lambda_4 = \alpha_2 + \beta_1,
\end{equation}
respectively; they satisfy $\lambda_1 + \lambda_2 = \lambda_3 + \lambda_4$. This establishes the ``only if'' direction.

Now let us prove the ``if'' direction. For any $H$ with a spectrum satisfying the conditions of the lemma, we construct a local Hamiltonian $H'=H_1\otimes\id+\id\otimes H_2$ that is \Tsimilar{} to $H$. As before, let $\alpha_i$ and $\ket{v_i}$ denote corresponding eigenvalues and eigenvectors of $H_1$, and let $\beta_i$ and $\ket{w_i}$ denote eigenvalues and eigenvectors of $H_2$. In terms of the eigenvalues $\lambda_i$ of $H$, we choose the eigenvalues of $H_1$ and $H_2$ as follows: $\alpha_1 = 0$, $\alpha_2 = \lambda_2-\lambda_3$, $\beta_1 = \lambda_1$, and $\beta_2 = \lambda_3$. With this choice, the eigenvalues of $H'$ are
\begin{equation}
  \alpha_1 + \beta_1 = \lambda_1, \quad
  \alpha_2 + \beta_2 = \lambda_2, \quad
  \alpha_1 + \beta_2 = \lambda_3, \quad
  \alpha_2 + \beta_1 = \lambda_4,
\end{equation}
where the last equality holds since $\lambda_1+\lambda_2=\lambda_3+\lambda_4$.
It remains to choose eigenvectors of $H_1$ and $H_2$ to obtain the required overlaps with $\ket{s}$. Notice that $\ket{v_1}$ and $\ket{w_1}$ completely determine the overlaps, since without loss of generality we can take $\ket{v_2}$ and $\ket{w_2}$ as in (\ref{eq:v2w2}). In fact, it suffices to choose $\ket{v_1}, \ket{w_1} \in \R^2$. If the angle between real unit vectors $\ket{v_1}=\smx{a\\b}$ and $\ket{w_1}=\smx{c\\d}$ is $\theta$, then $a d - b c = \sin{\theta}$ and $a c + b d = \cos{\theta}$. Thus, the overlaps (\ref{eq:PseudoScalarProduct}) and (\ref{eq:ScalarProduct}) are $\frac{1}{2} \sin^2 \theta = r$ and $\frac{1}{2} \cos^2 \theta = t$, respectively. Therefore, we can take any two real unit vectors having angle $\theta = \arcsin \sqrt{2 r}$. Since $H$ and $H'$ satisfy the conditions of Theorem \ref{thm:Tsimilarity}, they are \Tsimilar.
\end{proof}

%%%%%%%%%%%%%%%%%%%%%%%%%%%%%%%%%%%%%%%%%%%
\section{Almost all two-qubit unitaries are \universal{2}}
\label{app:UnitaryUniversality}
%%%%%%%%%%%%%%%%%%%%%%%%%%%%%%%%%%%%%%%%%%%

In this paper, we have primarily focused on universality of Hamiltonians.  In this appendix, we return to the question of universality of unitary gates.  In particular, we show how our results easily imply that a Haar-uniform \qubit{2} unitary is almost surely universal.

\begin{definition}
\label{def:SimulateU}
We say that a \qubit{2} unitary $U$ is \emph{\universal{2}} if, for any $V\in \U{4}$ and any $\varepsilon>0$, there exist $l \in \N$ and $N_1, \dotsc, N_l \in\N \cup \set{0}$ such that
\begin{equation}
  \opnorm{V-U^{N_l} \dotso U^{N_3} (TUT)^{N_2} U^{N_1}} < \varepsilon.
\end{equation}
\end{definition}

We say that a Hamiltonian $H$ \emph{corresponds} to a unitary $U$ if $e^{-iH} = U$. In general, for a given $U$, such Hamiltonian is not unique (if $H$ corresponds to $U$ then so does $H + 2 \pi k \ket{v}\bra{v}$ for any integer $k$ and any eigenvector $\ket{v}$ of $H$). However, if we also demand that all eigenvalues of $H$ are in the interval $(-\pi,\pi]$, then there is a unique such Hamiltonian.

If $\lambda_1, \dotsc, \lambda_4$ are the eigenvalues of $U \in \U{4}$, we can associate to each $\lambda_j$ a unique phase $\theta_j \in (-\pi,\pi]$ such that
\begin{equation}
  \lambda_j = e^{-i\theta_j}.
\end{equation}
We call these the \emph{canonical phases of $U$}. Furthermore, if the eigenvalues $\lambda_1, \dotsc, \lambda_4$ of $U$ are distinct, we can associate to each $\lambda_j$ a unique eigenvector $\ket{v_j}$ such that $U = \sum_{j=1}^4 \lambda_j \ket{v_j}\bra{v_j}$. We call
\begin{equation}
  H := \sum_{j=1}^4 \theta_j \ket{v_j}\bra{v_j}
\end{equation}
the \emph{canonical Hamiltonian} corresponding to $U$. Note from Lemma~\ref{lem:RatIndep} below that, for a Haar-random $U$, all $\theta_j$ are almost surely distinct, in which case there is no freedom in choosing the eigenvectors $\ket{v_j}$.

Building on our characterization of \universal{2} Hamiltonians, we now show that a Haar-random unitary matrix $U\in\U{4}$ is almost surely \universal{2}. The main idea is that the canonical Hamiltonian $H$ corresponding to such $U$ is almost surely \universal{2} and, moreover, natural powers of $U$ can be used to approximate the evolution according to $H$ for any amount of time. To achieve the latter, we need to ensure that the eigenvalues of $H$ are rationally independent with probability one. Recall that numbers $\alpha_1, \dotsc, \alpha_n \in \R$ are \emph{rationally dependent} if $\sum_{i=1}^n q_i \alpha_i = 0$ for some $q_1, \dotsc, q_n \in \Q$, not all of which are zero.

\begin{lemma}
Let $\thetas \in (-\pi,\pi]$ be the canonical phases of a Haar-random unitary $U\in\U{4}$. Then with probability one the angles $\pi, \thetas$ are rationally independent.
\label{lem:RatIndep}
\end{lemma}

\begin{proof}
The joint density of eigenvalues of a Haar-random unitary was obtained by Weyl and is given by 
\begin{equation}
  f(\thetas) := \frac{1}{(2\pi)^4 4!}
  \prod_{j<k}\abs{e^{i\theta_j}- e^{i\theta_k}}^2
  \label{eq:f}
\end{equation}
(see \textit{e.g.} \cite{Density}).
For the purpose of this proof the specific form of $f(\thetas)$ is irrelevant; we will only use the fact that $f(\thetas)$ is upper bounded by some fixed finite constant.

Consider the set $(-\pi,\pi]^4$ of all possible canonical phases of a $4 \times 4$ unitary. Let us exclude from this set all those points $\vtheta := (\thetas)$ for which $\pi, \thetas$ are rationally dependent, i.e., $q_0 \pi + \sum_{i=1}^4 q_i \theta_i = 0$ for some $q_0, q_1, \dotsc, q_4 \in \Q$. To account for all possible rational dependences, for each choice of coefficients $\vq := (q_0, q_1, \dotsc, q_4) \in \Q^5$, we define an affine hyperplane
\begin{equation}
  A_{\vq} := \set{\vtheta \in \R^4 : q_0 \pi + \sum_{i=1}^4 q_i \theta_i = 0}
\end{equation}
consisting of all those angles $\vtheta$ that are rationally dependent and their dependence can be expressed with coefficients given by $\vq$. Note that the set $A_{\vq}$ has zero measure in $\R^4$, so
\begin{equation}
  A := \bigcup_{\vq \in \Q^5} A_{\vq}
\end{equation}
also has zero measure, as it is a countable union of measure-zero sets.

If $S := (-\pi,\pi]^4$ is the set of all canonical phases and $\mu$ denotes the standard Lebesgue measure on $\R^4$, the probability that a Haar-random unitary has rationally independent canonical phases is
\begin{equation}
  \int_{S \setminus A} f d\mu,
\end{equation}
where $f$ is given by equation \eqref{eq:f}. However, since the set $S \setminus A$ differs from $S$ only by a set of measure zero, the Lebesgue integral of a bounded function over the two sets is the same \cite[p.~40]{Cramer}. Thus,
\begin{equation}
  \int_{S \setminus A} f d\mu
  =  \int_S f d\mu
  = 1
\end{equation}
and the result follows.
\end{proof}

\begin{lemma}
Let $\vtheta = (\thetasn)\in (-\pi,\pi]^n$ be an $n$-tuple of angles, where each $\theta_i$ is irrational with respect to $\pi$. Then for any $\eps > 0$ there exists $N\in\N$ such that $0<\abs{N \theta_i \bmod (-\pi,\pi]} \le \eps$ for all $i \in \{1,\dotsc,n\}$.
\label{lem:SmallRot}
\end{lemma}

\begin{proof}
Pick $m := \lceil 2\pi/\eps \rceil$ and partition $(-\pi,\pi]$ into $m$ segments of size $2\pi/m \le \eps$.
For a given $r\in\N$, the components of $r\vtheta$ fall into a combination of these segments. Each such combination can be specified by an element of
the finite set $\set{1,\dotsc,m}^n$.
On the other hand, the number of different possible values of $r\in\N$ is infinite. Thus, by the pigeonhole principle, there exist distinct $r,s\in\N$ such that for each $i$, the values $r\theta_i$ and $s\theta_i$ modulo $2\pi$ fall in the same segment (which can depend on $i$).   
For each $i$, by the irrationality assumption, $r \theta_i \not\equiv s\theta_i \bmod 2\pi$; moreover, both $r\theta_i$ and $s\theta_i$ fall in the interior of the $(2\pi/m)$-sized segments. Thus, taking $N := \abs{r-s}$ ensures that the components of $N \vtheta \bmod (-\pi,\pi]$ are all $\eps$-close to yet distinct from zero.
\end{proof}

We are now ready to show that for almost all unitary matrices $U$, we can use their natural powers to simulate the evolution $\eto{Ht}$ according to the canonical Hamiltonian $H$ for any time $t\in \R$.
\begin{lemma}
Let $U\in\U{4}$ and let $\vtheta=(\thetas)\in(-\pi,\pi]^4$ be the canonical phases of $U$. If the values $\thetas,\pi$ are rationally independent then for any $t \in \R$ and any $\eps>0$ there exists $N\in\N$ such that
\be
  \norm{U^{N} - \eto{Ht}}_\infty < \eps,
\ee
where $H$ is the canonical Hamiltonian corresponding to $U$.
\label{lem:GoodU}
\end{lemma}

\begin{proof}
Note that
\be
  \norm{U^{N} - \eto{Ht}}_\infty \le \sum_{j=1}^{4} 
  \abs{e^{i N \theta_j} - e^{-i t \theta_j}} .
\label{eq:Simul}
\ee
Now let $\vphi = (\phi_1,\phi_2,\phi_3,\phi_4) := - t \vtheta \bmod (-\pi,\pi] \in (-\pi,\pi]^4$ and note that
\be
  \abs{e^{i N \theta_j} - e^{i \phi_j}}
  = \bigl|e^{i (N \theta_j-\phi_j)} - 1\bigr|
  = \sqrt{2-2\cos(N\theta_j - \phi_j)}
  \le \abs{N \theta_j - \phi_j}.
\label{eq:BoundDiff}
\ee
Combining equation~(\ref{eq:Simul}) and equation~(\ref{eq:BoundDiff}), we obtain
\be
  \norm{U^{N} - \eto{Ht}}_\infty \le \| N \vtheta - \vphi \|_1.
\ee
Our goal is to pick $N$ so that $\| N \vtheta - \vphi \|_1 \le \eps$.

By Lemma~\ref{lem:SmallRot}, we can find $M\in \N$ such that each of the components of $\vtheta' := M \vtheta \bmod (-\pi,\pi]$ is $\frac{\eps}{8}$-close to zero. Next, let us consider the linear flow $\Phi(\vtheta',x) := x\vtheta' \bmod (-\pi,\pi]$ on the 4-torus $(-\pi,\pi]^4$. Since the values $\theta'_1,\theta'_2,\theta'_3,\theta'_4, \pi$ are rationally independent, the flow $\Phi(\vtheta',x)$ is dense in~$(-\pi,\pi]^4$ \cite{Flows}. Thus we can find $x\in\R$ such that $\|\vphi-\Phi(\vtheta',x)\|_1 < \frac{\eps}{2}$. Furthermore, we have $\bigl\| \Phi(\vtheta',x) - \Phi(\vtheta',\lfloor x \rfloor) \bigr\|_1 = 
\bigl\| \vtheta'\bigr\|_1 \bigl|x-\lfloor x \rfloor\bigr|< \frac{\eps}{2}$, since the four components of $\vtheta'$ are all $\frac{\eps}{8}$-close to zero. 
Combining the last two inequalities yields $\bigl\| \vphi - \Phi(\vtheta',\lfloor x \rfloor) \bigr\|_1 < \eps$. Therefore, taking $N := M \lfloor x \rfloor$ ensures that $\| N\vtheta - \vphi \|_1 < \eps$ and hence $\norm{U^N  - e^{-iHt}}_\infty < \eps$ as desired.
\end{proof}

We are now ready to show that almost all \qubit{2} unitaries are universal.

\begin{theorem}
A Haar-random unitary $U \in \U{4}$ is almost surely \universal{2}.
\end{theorem}
\begin{proof}
By Lemma~\ref{lem:RatIndep}, the canonical phases $\thetas\in(-\pi,\pi]$ of $U$ are almost surely rationally independent. Hence, it suffices to prove the theorem statement for such unitaries $U$.

Applying Lemma~\ref{lem:GoodU} to such unitaries $U$, 
we can use natural powers of $U$ to simulate the evolution according to its corresponding canonical Hamiltonian $H$ for any time $t\in\R$. So to establish the theorem it remains to argue that $H$ is  \universal{2}. We do this by showing that $H$ does not satisfy any of the three conditions in Theorem~\ref{thm:2NonUniversal}. If $H$ is $T$-similar to a local Hamiltonian then $\theta_i+\theta_j = \theta_k +\theta_l$ for $\set{i,j,k,l}=\set{1,\dotsc,4}$. Similarly, if $\tr(H)=0$ then $\sum_{i=1}^4 \theta_i = 0$. Therefore, each of the conditions (1) and (3) implies that the $\theta_i$ are rationally dependent, which contradicts our initial assumption. Let us now examine condition~(2). If $H$ shares an eigenvector with $T$ then, by Fact~\ref{fact:OrthogonalEigenvector}, $H$ (and hence also $U$) has an eigenvector orthogonal to the singlet state $\ket{s} = \frac{1}{\sqrt{2}}(\ket{01}-\ket{10})$. The probability that a Haar-random $U$ has an eigenvector orthogonal to $\ket{s}$ is zero. Therefore, $H$ almost surely does not satisfy any of the three conditions and is thus \universal{2}.
\end{proof}

This provides an alternate proof of the main result of \cite{DBE,Lloyd}, avoiding the shortcomings discussed in Section~\ref{sec:PreviousResultsSingleGate}.  However, unlike in our main result on Hamiltonian universality (Theorem \ref{thm:2NonUniversal}), we have not characterized precisely which \qubit{2} unitaries are universal.  To the best of our knowledge, it remains open to find such a characterization.

%%%%%%%%%%%%%%%%
% BIBLIOGRAPHY %
%%%%%%%%%%%%%%%%

\end{document}